%% file: main.tex
\documentclass[11pt]{llncs}
\usepackage{graphicx}
\usepackage{algorithm}
\usepackage{algorithmic}
\usepackage{amsmath}
\usepackage{amssymb}
\usepackage{amsfonts}
\usepackage{color}
\usepackage[colorlinks=true, linkcolor=blue, citecolor=blue, urlcolor=blue]{hyperref}
\usepackage{bm}
\usepackage{xcolor}
\usepackage{caption}
\usepackage{multirow}
\usepackage{diagbox}
\usepackage{array,makecell,booktabs}
\usepackage{threeparttable}
\newcolumntype{M}[1]{>{\centering\arraybackslash}m{#1}}
\usepackage{adjustbox}
\usepackage{listings}
\usepackage{matlab-prettifier}
\usepackage{subfigure}
\usepackage{bbding}
\usepackage{footmisc}
\usepackage{cleveref}
\usepackage{longtable}
\usepackage[misc]{ifsym}
\usepackage{ulem}
\usepackage{orcidlink}
\usepackage[T1]{fontenc}
\usepackage{todonotes}

\newcommand{\oomit}[1]{}

\lstdefinestyle{commonstyle}{
  basicstyle=\ttfamily\small,
  breaklines=true,
  prebreak=\raisebox{0ex}[0ex][0ex]{\ensuremath{\hookleftarrow}},
  frame=tb,
  numbers=left,
  xleftmargin=0.6cm,
  numberstyle=\tiny\color{gray},
  showstringspaces=false,
  keywordstyle=\bfseries\color{blue},
  commentstyle=\color{green!40!black},
  stringstyle=\color{purple},
  rulecolor=\color{black},
  backgroundcolor=\color{white},
  tabsize=4,
  columns=fullflexible,
  label=\lstname
}

\lstdefinestyle{pythonstyle}{
  language=Python,
  style=commonstyle
}

\begin{document}

\title{Quantitative Verification of Constrained Occupation Time for Stochastic Discrete-time Systems
}
\titlerunning{Quantitative Verification of Constrained Occupation Time}

\author{Bai Xue$^1$, Peixin Wang$^2$, and C.-H. Luke Ong$^3$} 

\institute{1. KLSS, Institute of Software, Chinese Academy of Sciences, Beijing, China\\
 Email: xuebai@ios.ac.cn\\
 2. Software Engineering Institute, East China Normal University, China \\ Email: pxwang@sei.ecnu.edu.cn\\
3. College of Computing and Data Science, Nanyang Technological University, Singapore\\
Email: luke.ong@ntu.edu.sg
}

\maketitle              

\begin{abstract}
This paper addresses the quantitative verification of constrained occupation time in stochastic discrete-time systems, focusing on the probability of visiting a target set at least $k$ times while maintaining safety. Such cumulative properties are essential for certifying repeated behaviors like surveillance and periodic charging. To address this, we present the first barrier certificate framework capable of certifying these behaviors. We introduce multiplicative stochastic barrier functions that encode visitation counts implicitly within the algebraic structure of a scalar barrier. By adopting a switched-system reformulation to handle safety, we derive rigorous probabilistic bounds for both finite and infinite horizons. Specifically, we show that dissipative barriers establish upper bounds ensuring the exponential decay of frequent visits, while attractive barriers provide lower bounds via submartingale analysis. The efficacy of the proposed framework is demonstrated through numerical examples. 
\keywords{Quantitative verification \and  Barrier certificates \and Constrained occupation time.}
\end{abstract}

\input{introduction}

\input{preliminaries}

\input{conditions_new}
\input{examples}

\input{conclusion}

 \bibliographystyle{splncs04}
 \bibliography{ref}
 \input{appendix}
\end{document}

%% file: introduction.tex
\section{Introduction}
\label{sec:intr}
Stochastic discrete-time systems provide a canonical model for safety-critical and decision-making processes operating under uncertainty, including autonomous robots \cite{soderstrom2012discrete}, cyber-physical systems \cite{clarke2011statistical}, and probabilistic programs \cite{gordon2014probabilistic}. Formal verification of such systems aims to provide mathematically rigorous guarantees that desired specifications hold with quantifiable probability \cite{clarke1997model,baier2008principles,manna2012temporal}. Over the past decades, substantial progress has been made in the verification of single-event specifications, such as safety (invariance), reachability, and reach-avoid properties, using techniques ranging from dynamic programming and Bellman equations (e.g., \cite{abate2008probabilistic,summers2010verification}) to  barrier certificates (e.g., \cite{prajna2007framework,chakarov2013probabilistic,henzinger2025supermartingale}).

However, many realistic specifications are inherently cumulative rather than event-based, necessitating the consideration of occupation times \cite{darling1957occupation,godreche2001statistics}. In surveillance, inspection, maintenance, or energy-aware autonomy, correctness is not determined by whether a target region is reached once, but by how frequently the system revisits certain regions while remaining safe. For example, consider a drone monitoring a wildfire. It is not sufficient for the drone to simply reach the monitoring zone once; rather, it must remain in that zone for a cumulative duration (e.g., 10 minutes) to gather sufficient data, all while staying within a safe flight envelope. Similarly, consider an autonomous underwater vehicle (AUV) inspecting a submerged pipeline. Due to strong and stochastic ocean currents, stationary hovering is infeasible. Instead, the AUV must execute repeated passes or dynamic station-keeping maneuvers to keep its sensors focused on the target area. The mission requires accumulating a prescribed amount of valid scan data—corresponding to occupation time—while counteracting drift forces that may push the vehicle toward hazardous obstacles. In such scenarios, the appropriate metric of success is the constrained occupation time: the total time spent in a target set before any safety violation occurs. Despite their practical relevance, such specifications remain largely underexplored in stochastic verification. 

In this paper, we propose a novel verification paradigm for constrained occupation-time properties. By introducing a multiplicative scaling mechanism that amplifies or attenuates the barrier value upon each target visit, we derive rigorous probabilistic upper and lower bounds on repeated visitation events for both finite and infinite horizons. To handle safety constraints, we adopt a switched-system reformulation that freezes the dynamics upon safety violation. This transformation allows the constrained occupation-time problem to be reduced to an unconstrained occupation-time analysis of an auxiliary stochastic process, enabling the systematic application of martingale-based techniques. The resulting framework yields two complementary classes of certificates. Dissipative stochastic barriers establish exponential upper bounds on the probability of frequent target visitation. Attractive stochastic barriers, on the other hand, provide quantitative lower bounds, certifying repeated interaction with the target region. Together, these results extend classical reach-avoid verification to a new class of cumulative specifications that cannot be efficiently addressed by existing barrier-based methods. 
Finally, the effectiveness of the proposed barrier conditions is demonstrated through two illustrative examples.

Our contributions are summarized below:
\begin{enumerate}
    \item \textbf{A novel certificate-based formulation of constrained occupation-time verification.}
We introduce, for the first time, stochastic barrier functions with a multiplicative structure that implicitly encodes visitation counts for constrained occupation-time verification. 
\item \textbf{Two-sided quantitative guarantees.}
We derive sufficient conditions for both upper and lower bounds on constrained occupation probabilities. Dissipative barriers certify exponential decay of frequent visitation, while attractive barriers certify repeated intersections under safety constraints.
\end{enumerate}

\subsection*{Related Work}

The verification of stochastic systems has evolved significantly over the past two decades, moving from qualitative almost-sure guarantees to quantitative probabilistic bounds \cite{kushner1967stochastic,prajna2007framework,chakarov2013probabilistic,kenyon2021supermartingales,takisaka2021ranking,yu2023safe}. This section reviews closely related barrier-based methods for the formal verification of stochastic discrete-time systems, categorizing them into infinite-horizon and finite-horizon approaches.

\noindent\textbf{Infinite-Horizon Verification.}
Barrier certificates, originally inspired by Lyapunov functions for stability analysis, were introduced to stochastic (hybrid) systems in \cite{prajna2007framework} to certify safety (invariance) properties using supermartingales. This foundational work used Ville's inequality to upper-bound the probability of exiting a safe set over an infinite horizon.  Subsequently, research expanded to reach-avoid specifications—reaching a target while remaining safe \cite{prajna2007convex,cao2025comparative}.
\begin{itemize}
    \item Ranking Supermartingales: Early approaches combined barrier functions with ranking supermartingales to certify that a system eventually reaches a target set with probability one (almost-sure reachability) or to bound termination probabilities in probabilistic programs \cite{chakarov2013probabilistic,chatterjee2017stochastic}. 

     \item Reach-Avoid Supermartingales: Under the assumption of  the existence of a robust invariant set (a domain the system never leaves), recent works have introduced specialized certificate structures such as Additive and Multiplicative Reach-Avoid Supermartingales \cite{vzikelic2023compositional}. 
    
    \item Relaxed Bellman-like Equations: A more direct approach involves relaxing the Bellman-like equations associated with the reachability probability \cite{xue2021reach,yu2023safe}. Recently, \cite{xue2024sufficient} proposed necessary and sufficient barrier conditions for safety and reach-avoid verification. These methods can provide both lower and upper bounds by reversing inequalities.
\end{itemize}

\noindent\textbf{Finite-Horizon Verification.}
Finite-time verification is also attracting much attention for physical systems with operational deadlines.
\begin{itemize}
    \item c-Martingales: Building on foundational work \cite{kushner1967stochastic}, $c$-martingales were introduced to bound the failure probability over a finite time horizon \cite{steinhardt2012finite}. This framework was subsequently extended to finite-time temporal logic verification \cite{pnueli1977temporal} under state invariance constraints \cite{jagtap2018temporal}, to stopped stochastic processes \cite{santoyo2021barrier}, and later adapted to discrete-time systems and neural barrier certificates \cite{mathiesen2022safety}. These works primarily focus on deriving upper bounds on failure probabilities. More recently, the $c$-martingale framework was generalized to provide both lower and upper bounds on safety violations \cite{zhi2024unifying}.
    
    \item Switched System Approaches: To mitigate the reliance on robust invariants, \cite{xue2024finite} proposed a switched-system formulation. By freezing the dynamics upon safety violation, this approach allows for the computation of both lower and upper bounds for safety and reach-avoid probabilities without requiring the system to stay within a robust invariant.
\end{itemize}
The aforementioned works address single-event properties (reaching a target or unsafe set once). In contrast, this work addresses the problem of  constrained occupation time—visiting a target $k \ge 1$ times while remaining safe in the context of stochastic barriers. 

The verification of $\omega$-regular properties represents a significant generalization of the reach-avoid problem. These properties typically involve infinite-time behaviors, such as recurrence (visiting a set infinitely often, i.e., the Büchi condition) or persistence (eventually staying within a set) \cite{chakarov2016deductive,abate2024stochastic,abate2025quantitative,henzinger2025supermartingale}. Recently, \cite{wang2025quantitative} addresses quantitative verification of $\omega$-regular properties in probabilistic programs by decomposing $\omega$-regular properties (or, the Rabin acceptance condition) into persistence and recurrence components. \textit{While their analysis of recurrence bounds the probability of visiting a set at most $k$ times, our work addresses the constrained  occupation time problem, which explicitly requires visiting a target at least $k$ times while strictly avoiding unsafe regions—a safety constraint not directly targeted by their recurrence barriers.} On the other hand, their reliance on augmenting the state space with explicit counter variables introduces a dependency where the system dimension grows with the visit requirement $k$. 
In distinct contrast, our framework avoids state augmentation entirely. We introduce a novel mechanism to encode the visitation count implicitly into the multiplicative structure of a scalar barrier field defined solely on the original state space. This decouples the complexity of the verification condition from the magnitude of $k$, offering a theoretically superior representation for high-frequency occupation tasks.

%% file: preliminaries.tex
The paper is structured as follows: Section \ref{sec:preliminaries} outlines the problem setting, while Section \ref{sec:qvot} details our barrier methods for quantitative verification of constrained occupation times. Section \ref{sec:examples} demonstrates the efficacy of the proposed barrier methods using two scalar systems. Finally, Section \ref{sec:con} concludes.

\noindent\textbf{Notations}: Let $\mathbb{R}$, $\mathbb{R}_{\geq 0}$, $\mathbb{N}$, and $\mathbb{N}_{\le k}$ denote the sets of real numbers, non-negative real numbers, non-negative integers, and non-negative integers up to $k$, respectively. For a set $\mathcal{A}$, $\mathbf{1}_{\mathcal{A}}(x)$ denotes the indicator function, which equals 1 if $x \in \mathcal{A}$ and 0 otherwise. Similarly, for a condition $P$, we denote the indicator $\mathbf{1}_{{P}}$ as 1 if $P$ is true and 0 otherwise. We denote the set difference by $A \setminus B$.

\section{Preliminaries}
\label{sec:preliminaries}
This section introduces the mathematical framework used in the paper. We define the stochastic system model and the underlying probability space, introduce constrained occupation time verification problem, and recall key martingale results—namely the Optional Stopping Theorem \cite{williams1991probability} and Markov’s inequality \cite{grimmett2020probability}—that underpin our probabilistic analysis.

\subsection{Problem Formulation}
We consider a discrete-time stochastic dynamical system governed by
\begin{equation}
\label{system}
    X_{t+1} = f(X_t, d_t), \quad t \ge 0,
\end{equation}
where $X_t \in \mathbb{R}^n$ denotes the system state, $\{d_t\}_{t \ge 0}$ is a sequence of independent and identically distributed (i.i.d.) disturbances taking values in a measurable set $\mathcal{D} \subseteq \mathbb{R}^m$, and $f:\mathbb{R}^n \times \mathcal{D}\rightarrow \mathbb{R}^n$.

To rigorously analyze the probabilistic behavior of \eqref{system}, we define the associated probability space explicitly. Let $(\mathcal{D}, \mathcal{F}_{\mathcal{D}}, \mathbb{P}_{\mathcal{D}})$ be the probability space for a single disturbance, where $\mathcal{F}_{\mathcal{D}}$ is the Borel $\sigma$-algebra over $\mathcal{D}$ and $\mathbb{P}_{\mathcal{D}}$ is the probability measure. The canonical sample space is defined as $\Omega = \mathcal{D}^\infty$, representing the set of all infinite disturbance sequences $\omega = (d_0, d_1, \dots)$. We endow $\Omega$ with the product topology and the product probability measure $\mathbb{P} = \mathbb{P}_{\mathcal{D}}^\infty$. The expectation operator with respect to $\mathbb{P}$ is denoted by $\mathbb{E}[\cdot]$. We also define the natural filtration $\mathcal{F}_t = \sigma(X_0, \dots, X_t)$, which captures the information history available up to time $t$.

Given an initial state $X_0 \in \mathbb{R}^n$ and a disturbance realization $\omega \in \Omega$, the system \eqref{system} generates a unique state trajectory. We denote this deterministic sequence by $\phi_{\omega}^{X_0}: \mathbb{N} \to \mathbb{R}^n$, defined recursively as:
\begin{align}
    \phi_{\omega}^{X_0}(0) &= X_0, \\
    \phi_{\omega}^{X_0}(t+1) &= f(\phi_{\omega}^{X_0}(t), d_t(\omega)), \quad \forall t \ge 0,
\end{align}
where $d_t(\omega)$ denotes the $t$-th component of $\omega$. Consequently, the random variable $X_t$ in \eqref{system} is the mapping $\omega \mapsto \phi_{\omega}^{X_0}(t)$.

\noindent\textbf{Constrained Occupation Time and Its Quantitative Verification Problem.}
Let $\mathcal{X} \subseteq \mathbb{R}^n$ be a designated \textit{safe set}. The system is deemed safe as long as the state remains within $\mathcal{X}$. We define the \textit{safety exit time}, $\tau_{\text{safe}}$, as the first time instant the system violates the safety constraint:
\begin{equation}
    \tau_{\text{safe}}(\omega) := \inf \{ t \ge 0 : \phi_{\omega}^{X_0}(t) \notin \mathcal{X} \}.
\end{equation}
By convention, if the trajectory never leaves $\mathcal{X}$, we set $\tau_{\text{safe}} = \infty$.

Consider a measurable \textit{target set} $\mathcal{T} \subseteq \mathcal{X}$ located entirely within the safe region. We aim to quantify the frequency of system interactions with $\mathcal{T}$ prior to any safety violation. To this end, we define the \textit{constrained occupation time}, $N_{\mathcal{T}}(N, \omega)$, as the cumulative number of visits to $\mathcal{T}$ over a horizon $N \in \mathbb{N}_{\ge 1} \cup \{\infty\}$, counting only those visits that occur while the system remains safe:
\begin{equation}
    N_{\mathcal{T}}(N, \omega) := \sum_{t=0}^{N} \mathbf{1}_{\mathcal{T}}(\phi_{\omega}^{X_0}(t)) \cdot \mathbf{1}_{\{t < \tau_{\text{safe}}(\omega)\}},
\end{equation}
where $X_0\in \mathcal{X}$.

Our primary objective is to bound the probability that the constrained occupation time attains at least $k$ visits.
\begin{problem}[Quantitative Verification on Constrained Occupation Time]
\label{pro}
Given a horizon $N$ and a visit threshold $k$ (with $1 \leq k \leq N+1$ for finite $N$, or $k \ge 1$ for $N=\infty$), our objective is to compute rigorous lower and upper bounds $\epsilon_1, \epsilon_2 \in [0,1]$ such that:
\begin{equation*}
    \epsilon_1 \leq \mathbb{P}(N_{\mathcal{T}}(N,\omega) \ge k) \leq \epsilon_2,
\end{equation*}
where the probability is defined over the space of disturbance trajectories $\Omega$.
\end{problem}

\subsection{Martingales, Stopping Theorems, and Markov’s Inequality}
This subsection recalls fundamental results from martingale theory \cite{williams1991probability} - specifically stopping times, the Optional Stopping Theorem, and Markov's inequality - that enable the rigorous analysis of system behavior over random horizons.

\begin{definition}[Stopping Time]
A random variable $\tau: \Omega \to \mathbb{N} \cup \{\infty\}$ is a \textit{stopping time} with respect to the filtration $\{\mathcal{F}_t\}_{t \ge 0}$ if, for every deterministic time $t \ge 0$, the event $\{\tau = t\}$ is $\mathcal{F}_t$-measurable.
\end{definition}

\begin{definition}[Martingale, Supermartingale, Submartingale]
\label{def:martingales}
An adapted process $\{M_t\}_{t \ge 0}$ with $\mathbb{E}[|M_t|] < \infty$ is called a \textit{martingale} with respect to $\mathcal{F}_t$ if for all $t \ge 0$, $\mathbb{E}[M_{t+1} \mid \mathcal{F}_t] = M_t$. The process is called a \textit{Supermartingale} if $\mathbb{E}[M_{t+1} \mid \mathcal{F}_t] \le M_t$, and \textit{Submartingale} if $\mathbb{E}[M_{t+1} \mid \mathcal{F}_t] \ge M_t$.
\end{definition}

To handle analysis over random horizons (such as $\tau_{\text{safe}}$), we utilize two fundamental results from martingale theory.

\begin{proposition}[Optional Stopping Theorem \cite{williams1991probability}]
\label{thm:OST}
Let $\{M_t\}_{t \ge 0}$ be an adapted process and let $\tau$ be a stopping time.
\begin{enumerate}
    \item \textbf{Supermartingale case.}
    If $\{M_t\}$ is a nonnegative supermartingale, then
    $\mathbb{E}\!\left[M_\tau \mathbf{1}_{\{\tau < \infty\}}\right] \le \mathbb{E}[M_0]$.
    In particular, if $\tau$ is almost surely bounded, then $\mathbb{E}[M_\tau] \le \mathbb{E}[M_0]$.
    
    \item \textbf{Submartingale case.}
    If $\{M_t\}$ is a submartingale and $\tau$ is almost surely bounded, then $\mathbb{E}[M_\tau] \ge \mathbb{E}[M_0]$.
\end{enumerate}
\end{proposition}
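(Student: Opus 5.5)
The Optional Stopping Theorem as stated is a standard result. I will derive both parts from the fact that a stopped (super/sub)martingale keeps its martingale character. Then I will pass to the limit where needed.

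\textbf{Step 1: the stopped process.} For any adapted $\{M_t\}$ and stopping time $\tau$, define $M^\tau_t := M_{t\wedge\tau}$. Then
\[
M_{(t+1)\wedge\tau}-M_{t\wedge\tau}=\mathbf{1}_{\{\tau>t\}}(M_{t+1}-M_t).
\]
The event $\{\tau>t\}=\Omega\setminus\bigcup_{s\le t}\{\tau=s\}$ is $\mathcal{F}_t$-measurable, so the indicator can be pulled out of the conditional expectation:
\[
\mathbb{E}[M^\tau_{t+1}-M^\tau_t\mid\mathcal{F}_t]=\mathbf{1}_{\{\tau>t\}}\,\mathbb{E}[M_{t+1}-M_t\mid\mathcal{F}_t].
\]
This is $\le 0$ in the supermartingale case and $\ge 0$ in the submartingale case. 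Integrability holds because $|M_{t\wedge\tau}|\le\sum_{s\le t}|M_s|$. Hence $M^\tau$ is a supermartingale (resp.\ submartingale), and taking expectations iteratively gives $\mathbb{E}[M_{t\wedge\tau}]\le\mathbb{E}[M_0]$ (resp.\ $\ge$) for every deterministic $t$.

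\textbf{Step 2: bounded $\tau$.} If $\tau\le K$ almost surely for a constant $K$, then $M_{K\wedge\tau}=M_\tau$ a.s. Step 1 with $t=K$ therefore gives $\mathbb{E}[M_\tau]\le\mathbb{E}[M_0]$ in the supermartingale case and $\mathbb{E}[M_\tau]\ge\mathbb{E}[M_0]$ in the submartingale case. This proves the second claim of part 1 and all of part 2.

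\textbf{Step 3: unbounded $\tau$, nonnegative supermartingale.} This is the only step needing care, because $M_\tau$ is undefined on $\{\tau=\infty\}$. The indicator $\mathbf{1}_{\{\tau<\infty\}}$ handles this. On $\{\tau<\infty\}$ we have $M_{t\wedge\tau}\to M_\tau$ as $t\to\infty$, so pointwise
\[
M_\tau\mathbf{1}_{\{\tau<\infty\}}=\lim_{t\to\infty}M_{t\wedge\tau}\mathbf{1}_{\{\tau<\infty\}}\le\liminf_{t\to\infty}M_{t\wedge\tau},
\]
where the inequality uses nonnegativity. Fatou's lemma, applicable since $M\ge0$, together with Step 1 then gives
\[
\mathbb{E}[M_\tau\mathbf{1}_{\{\tau<\infty\}}]\le\liminf_{t\to\infty}\mathbb{E}[M_{t\wedge\tau}]\le\mathbb{E}[M_0].
\]

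Nonnegativity is exactly what licenses Fatou here, in place of uniform integrability. This is why no boundedness of $\tau$ is needed in the supermartingale case, whereas the submartingale case must assume a bounded $\tau$.
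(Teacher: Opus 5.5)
Your proof is correct and is the standard argument. The stopped process $M_{t\wedge\tau}$ inherits the super/submartingale property, and evaluating it at the deterministic bound $K$ gives both bounded-$\tau$ cases. Reading ``almost surely bounded'' as $\tau\le K$ a.s.\ for a constant $K$ is the right interpretation, since the submartingale claim fails for merely a.s.\ finite $\tau$. Fatou's lemma then handles the nonnegative supermartingale on $\{\tau<\infty\}$.

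The paper gives no proof of its own and simply cites Williams, whose proof of Doob's optional stopping theorem follows essentially this route.
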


\begin{proposition}[Markov's Inequality \cite{grimmett2020probability}]
\label{thm:markov}
Let $X$ be a non-negative random variable defined on a probability space $(\Omega, \mathcal{F}, \mathbb{P})$. For any constant $a > 0$, $\mathbb{P}(X \ge a) \le \frac{\mathbb{E}[X]}{a}$ holds.
\end{proposition}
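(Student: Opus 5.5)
The plan is to reduce the statement to a pointwise inequality between random variables and then use monotonicity and linearity of the expectation on $(\Omega,\mathcal{F},\mathbb{P})$. First, fix $a>0$ and consider the event $A:=\{\omega\in\Omega: X(\omega)\ge a\}$. This event is $\mathcal{F}$-measurable because $X$ is measurable and $[a,\infty)$ is a Borel set, so $\mathbf{1}_{A}$ is a well-defined simple random variable with $\mathbb{E}[\mathbf{1}_A]=\mathbb{P}(X\ge a)$.

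The key step is the pointwise bound
\begin{equation*}
a\,\mathbf{1}_{\{X\ge a\}}(\omega)\;\le\; X(\omega)\qquad\text{for all }\omega\in\Omega .
\end{equation*}
I will verify it by splitting into two cases.
\begin{itemize}
    \item On $A$, the left-hand side equals $a$, and $X(\omega)\ge a$ by the definition of $A$.
    \item On $\Omega\setminus A$, the left-hand side is $0$, and $X(\omega)\ge 0$ by the non-negativity hypothesis.
\end{itemize}
This is the only place where non-negativity of $X$ is used, and it is essential: without it the complement case fails.

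Next, I take expectations on both sides. Monotonicity of the Lebesgue integral for non-negative measurable functions applies here, and it does not require $\mathbb{E}[X]<\infty$; if $\mathbb{E}[X]=\infty$, the claim is trivial anyway. Together with linearity for the constant $a$, this gives
\begin{equation*}
a\,\mathbb{P}(X\ge a)\le \mathbb{E}[X].
\end{equation*}
Dividing by $a>0$ yields $\mathbb{P}(X\ge a)\le \mathbb{E}[X]/a$, as claimed.

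I do not expect a genuine obstacle. The only points needing care are the measurability of $\{X\ge a\}$ and allowing $\mathbb{E}[X]=+\infty$, both of which are handled above.
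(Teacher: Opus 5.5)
Your proof is correct: the pointwise bound $a\,\mathbf{1}_{\{X\ge a\}}\le X$, followed by monotonicity and linearity of the expectation, is the standard argument, and your handling of measurability and of the case $\mathbb{E}[X]=\infty$ is sound. The paper gives no proof of its own; it cites Grimmett--Stirzaker, where the result is proved by essentially this same indicator-function argument.
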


%% file: conditions_new.tex
\section{Quantitative Verification of Constrained Occupation Time}
\label{sec:qvot}

In this section, we propose new stochastic barrier conditions for characterizing upper and lower bounds on the probability defined in Problem \ref{pro}. The sufficient condition for upper bounds is introduced in Subsection \ref{sub:scub1}, and the condition for lower bounds is formulated in Subsection \ref{sub:sclb1}. 

To facilitate this analysis, we adopt the switched-system methodology from \cite{xue2021reach,xue2024finite}. The core idea is to ``freeze'' the dynamics of the original system \eqref{system} the moment it exits the safe set $\mathcal{X}$. This transformation encodes the safety constraint implicitly within the dynamics, effectively reducing the constrained occupation-time problem to an unconstrained analysis of an auxiliary stochastic process.
\begin{definition}
\label{switch}
A \textit{switched stochastic system} is the tuple $(\widetilde{\mathcal{X}}, \widetilde{f}, X_0)$, where:
\begin{enumerate}
    \item $\widetilde{\mathcal{X}} \subseteq \mathbb{R}^n$ is the augmented state space containing the one-step reachable set 
    \[
        \widetilde{\Omega} = \mathcal{X} \cup \{f(X,d) \mid X \in \mathcal{X},\, d \in \mathcal{D}\} \subseteq \widetilde{\mathcal{X}};
    \]
    \item $X_0 \in \mathcal{X}$ is the initial state;
    \item $\widetilde{f} \colon \widetilde{\mathcal{X}} \times \mathcal{D} \to \widetilde{\mathcal{X}}$ is the transition function, with absorbing dynamics outside $\mathcal{X}$:
    \[
        \widetilde{f}(\widetilde{X},d) = 
        \begin{cases} 
            f(\widetilde{X},d), & \widetilde{X} \in \mathcal{X}, \\
            \widetilde{X}, & \widetilde{X} \in \widetilde{\mathcal{X}} \setminus \mathcal{X}.
        \end{cases}
    \]
\end{enumerate}
The state evolves as
\begin{equation}
\label{sdss1}
   \widetilde{X}_{t+1} = \widetilde{f}(\widetilde{X}_t, d_t), \quad t \ge 0. 
\end{equation}
\end{definition}

Let $\widetilde{\bm{\phi}}_{\pi}^{X_0}(\cdot)$ denote the trajectory of the switched system \eqref{sdss1} under a disturbance signal $\pi$. The switched system effectively mirrors the original dynamics until a safety violation occurs. Specifically, if the original system \eqref{system} exits $\mathcal{X}$ at time $\tau_{\text{safe}}$, the switched system enters the region $\widetilde{\mathcal{X}} \setminus \mathcal{X}$ at the same instant. Crucially, the identity dynamics in the second case of $\widetilde{f}$ ensure that $\widetilde{\mathcal{X}} \setminus \mathcal{X}$ acts as an absorbing set. Consequently, the system state remains trapped in $\widetilde{\mathcal{X}} \setminus \mathcal{X}$ for all $t \ge \tau$, making the domain $\widetilde{\mathcal{X}}$ a robust invariant.

We define the occupation time for the switched system over the horizon $N$ (including $t=0$ to $N$) as:
\begin{equation*}
    \widetilde{N}_{\mathcal{T}}(N,\omega) := \sum_{t=0}^{N} \mathbf{1}_{\mathcal{T}}(\widetilde{\bm{\phi}}_{\omega}^{X_0}(t)).
\end{equation*}

The following lemma establishes the equivalence between the constrained occupation problem for the original system and the unconstrained occupation problem for the switched system \eqref{sdss1}.

\begin{lemma}
\label{lema1}
For any horizon $N \in \mathbb{N}$ and threshold $k \in \mathbb{N}$ with $1 \le k \le N+1$,  
\[
\mathbb{P}(N_{\mathcal{T}}(N,\omega) \ge k) = \mathbb{P}(\widetilde{N}_{\mathcal{T}}(N,\omega) \ge k).
\]
\end{lemma}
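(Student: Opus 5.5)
We will prove the stronger pathwise identity $N_{\mathcal{T}}(N,\omega) = \widetilde{N}_{\mathcal{T}}(N,\omega)$ for every $\omega \in \Omega$. Equality of the events $\{N_{\mathcal{T}}(N,\omega)\ge k\}$ and $\{\widetilde{N}_{\mathcal{T}}(N,\omega)\ge k\}$, and hence of their probabilities under $\mathbb{P}$, then follows at once. Fix $\omega$ and write $\tau=\tau_{\text{safe}}(\omega)$.

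\textbf{Step 1 (trajectories agree up to $\tau$).} By induction on $t$, we show $\widetilde{\bm{\phi}}_{\omega}^{X_0}(t)=\phi_{\omega}^{X_0}(t)$ for all $0\le t\le \tau$, with $t<\infty$. The base case holds because both trajectories start at $X_0$. For the inductive step, take $t+1\le\tau$. Then $t<\tau$, so $\phi_{\omega}^{X_0}(t)\in\mathcal{X}$ by the definition of $\tau$ as an infimum. By the induction hypothesis the switched state coincides with it, so the first branch of $\widetilde{f}$ applies and gives $\widetilde{\bm{\phi}}_{\omega}^{X_0}(t+1)=f(\phi_{\omega}^{X_0}(t),d_t)=\phi_{\omega}^{X_0}(t+1)$. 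Well-definedness is also part of this step: the successor lies in $\widetilde{\Omega}\subseteq\widetilde{\mathcal{X}}$, so the switched trajectory stays in its domain.

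\textbf{Step 2 (absorption after $\tau$).} If $\tau<\infty$, then $\widetilde{\bm{\phi}}_{\omega}^{X_0}(\tau)=\phi_{\omega}^{X_0}(\tau)\notin\mathcal{X}$. By the second branch of $\widetilde{f}$ and induction, $\widetilde{\bm{\phi}}_{\omega}^{X_0}(t)=\phi_{\omega}^{X_0}(\tau)\in\widetilde{\mathcal{X}}\setminus\mathcal{X}$ for all $t\ge\tau$. Since $\mathcal{T}\subseteq\mathcal{X}$, it follows that $\mathbf{1}_{\mathcal{T}}(\widetilde{\bm{\phi}}_{\omega}^{X_0}(t))=0$ for all $t\ge\tau$.

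\textbf{Step 3 (termwise comparison).} We compare the two sums term by term for $t\in\{0,\dots,N\}$.
\begin{itemize}
\item If $t<\tau$, Step 1 makes the states equal and $\mathbf{1}_{\{t<\tau\}}=1$, so the summands coincide.
\item If $t\ge\tau$, the summand of $N_{\mathcal{T}}$ vanishes because of the factor $\mathbf{1}_{\{t<\tau\}}$, and the summand of $\widetilde{N}_{\mathcal{T}}$ vanishes by Step 2.
\end{itemize}
Summing over $t$ gives the pathwise identity.

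The only subtle point is the assumption $\mathcal{T}\subseteq\mathcal{X}$, which Step 2 needs so that frozen unsafe states are never counted as visits. Measurability of both counts is inherited from measurability of $f$ and of the sets involved, so both probabilities are well defined. The same argument works verbatim for $N=\infty$.
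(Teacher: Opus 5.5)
Your proof is correct and follows essentially the same route as the paper: you prove the pathwise identity $N_{\mathcal{T}}(N,\omega)=\widetilde{N}_{\mathcal{T}}(N,\omega)$ by showing the two trajectories coincide before $\tau_{\text{safe}}$, and that the frozen state in $\widetilde{\mathcal{X}}\setminus\mathcal{X}$ contributes no visits because $\mathcal{T}\subseteq\mathcal{X}$. Your induction is somewhat more explicit than the paper's: you extend the agreement to $t=\tau_{\text{safe}}$ itself and check well-definedness via $\widetilde{\Omega}\subseteq\widetilde{\mathcal{X}}$, both of which the paper uses only implicitly.
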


\begin{proof}
    Let $\omega \in \Omega$ be an arbitrary disturbance signal and let $\tau_{\text{safe}}(\omega)$ be the safety exit time of the original system \eqref{system}. We compare the accumulation of visits to $\mathcal{T}$ for both systems.

    \textbf{1. Before Exit ($t < \tau_{\text{safe}}(\omega)$):}
    By the definition of $\tau_{\text{safe}}$, for all $0 \le t < \tau_{\text{safe}}(\omega)$, the state satisfies $\bm{\phi}_{\omega}^{X_0}(t) \in \mathcal{X}$. Since $\widetilde{f}(\widetilde{X}, d) = f(\widetilde{X}, d)$ for all $\widetilde{X} \in \mathcal{X}$, the trajectory of the switched system coincides with that of the original system:
    \[
    \widetilde{\bm{\phi}}_{\omega}^{X_0}(t) = \bm{\phi}_{\omega}^{X_0}(t), \quad \forall t < \tau_{\text{safe}}(\omega).
    \]
    Consequently, $\mathbf{1}_{\mathcal{T}}(\widetilde{\bm{\phi}}_{\omega}^{X_0}(t)) = \mathbf{1}_{\mathcal{T}}(\bm{\phi}_{\omega}^{X_0}(t))$ for all such $t$.

    \textbf{2. After Exit ($t \ge \tau_{\text{safe}}(\omega)$):}
    If the system exits $\mathcal{X}$, then at time $t = \tau_{\text{safe}}(\omega)$, the state satisfies $\bm{\phi}_{\omega}^{X_0}(t) \in \widetilde{\mathcal{X}} \setminus \mathcal{X}$. By Definition \ref{switch}, the switched system enters this region and remains there for all $t \ge \tau_{\text{safe}}(\omega)$ (i.e., $\widetilde{\bm{\phi}}_{\omega}^{X_0}(t) \in \widetilde{\mathcal{X}} \setminus \mathcal{X}$).
    
    Recall that the target set is entirely within the safe set ($\mathcal{T} \subseteq \mathcal{X}$); thus, it is disjoint from the region $\widetilde{\mathcal{X}} \setminus \mathcal{X}$. This implies
    \[
    \mathbf{1}_{\mathcal{T}}(\widetilde{\bm{\phi}}_{\omega}^{X_0}(t)) = 0, \quad \forall t \ge \tau_{\text{safe}}(\omega).
    \]
    For the original system, the definition of $N_{\mathcal{T}}(N,\omega)$ includes the term $\mathbf{1}_{\{t < \tau_{\text{safe}}\}}$, ensuring that for $t \ge \tau_{\text{safe}}(\omega)$, the contribution to the count is also zero.

    \textbf{Conclusion:}
    Combining these observations, we have
    \begin{align*}
        \widetilde{N}_{\mathcal{T}}(N,\omega) &= \sum_{t=0}^{N} \mathbf{1}_{\mathcal{T}}(\widetilde{\bm{\phi}}_{\omega}^{X_0}(t)) = \sum_{t=0}^{\min(N, \tau_{\text{safe}}(\omega))-1} \mathbf{1}_{\mathcal{T}}(\bm{\phi}_{\omega}^{X_0}(t)) + 0 \\
        &= \sum_{t=0}^{N} \mathbf{1}_{\mathcal{T}}(\bm{\phi}_{\omega}^{X_0}(t)) \cdot \mathbf{1}_{\{t < \tau_{\text{safe}}(\omega)\}} = N_{\mathcal{T}}(N,\omega).
    \end{align*}
    Since the random variables $N_{\mathcal{T}}(N,\omega)$ and $\widetilde{N}_{\mathcal{T}}(N,\omega)$ are identical for every sample path $\omega$, their probability distributions are identical. \qed
\end{proof}

Therefore, in the remainder of this paper, we focus our analysis on bounding $\mathbb{P}(\widetilde{N}_{\mathcal{T}}(N,\omega) \ge k)$.

\subsection{Dissipative Barriers for Upper Bounds}
\label{sub:scub1}

In this subsection, we establish a sufficient condition to upper bound the probability of the constrained occupation time exceeding a threshold $k$. We utilize the switched system introduced in Definition \ref{switch}. To simplify notation, in this subsection $\widetilde{X}_t$ denotes the state of the switched system \eqref{sdss1}.

Theorem \ref{thm:mult_beta_occupation} presents a stochastic barrier condition relying on a dissipative function $v: \widetilde{\mathcal{X}} \to \mathbb{R}_{\ge 0}$ to establish an upper bound on the probability that the switched system \eqref{sdss1} visits $\mathcal{T}$ at least $k$ times within the horizon $N$. By Lemma \ref{lema1}, this corresponds directly to the probability that the original system \eqref{system} visits $\mathcal{T}$ at least $k$ times before leaving the safe set $\mathcal{X}$. This theorem interprets $v(x)$ as a dissipative energy measure with a contraction rate $\alpha \in (0,1)$. The proof relies on a scaled process $Z_t$ that encodes the occupation count by amplifying the state value by $\alpha^{-1}$ upon each visit to $\mathcal{T}$. Although this scaling artificially inflates trajectories with frequent visits, the drift condition ensures that the global expected energy remains controlled. Consequently, for finite horizons, the specific event of realizing $k$ visits—which necessitates a geometric surge in $Z_t$ exceeding $\alpha^{-k}$—is constrained by Markov's inequality to occur with exponentially low probability. In the infinite horizon limit, this guarantee holds strictly provided the additive drift vanishes ($\beta=0$).

\begin{theorem}[Occupation-Time Upper Bounds via Dissipative Barriers]
\label{thm:mult_beta_occupation}
Suppose there exist a measurable function $v:\widetilde{\mathcal{X}} \to \mathbb{R}_{\ge 0}$ and constants $\alpha \in (0,1)$ and $\beta \ge 0$ such that:
\begin{enumerate}
    \item \textbf{Multiplicative Barrier Condition:}
    \begin{equation}
    \label{eq:mult_beta_drift}
    \mathbb{E}\left[ \widetilde{v}(X_{t+1}) \mid X_t \right] \le \alpha v(X_t) + \beta, \quad \forall X_t \in \mathcal{X},
    \end{equation}
    where $
    \widetilde{v}(X) := \mathbf{1}_{\mathcal{T}}(X) v(X) + \alpha \mathbf{1}_{\widetilde{\mathcal{X}} \setminus \mathcal{T}}(X) v(X)$.

    \item \textbf{Positivity on the target:}\quad $    v(x) \ge 1, \quad \forall x \in \mathcal{T}$. 
\end{enumerate}
Then, for all integers $k$ such that $1 \le k \le N+1 < \infty$,
\begin{equation}
\label{eq:mult_beta_bound}
\mathbb{P}\big(\widetilde{N}_{\mathcal{T}}(N,\omega) \ge k \big) \le \left( v(X_0)\rho(X_0) + \frac{\beta}{\alpha} \sum_{t=1}^{N} \alpha^{-t} \right) \alpha^k,
\end{equation}
where $X_0\in \mathcal{X}$, and $\rho(X_0) = \alpha^{-1}$ if $X_0 \in \mathcal{T}$ and $\rho(X_0) = 1$ otherwise.
In the special case where $N = \infty$ and $\beta = 0$, the bound simplifies to:
\begin{equation}
\label{eq:mult_beta_bound_inf}
\mathbb{P}\big(\widetilde{N}_{\mathcal{T}}(\infty,\omega) \ge k \big) \le v(X_0)\rho(X_0) \alpha^k.
\end{equation}
\end{theorem}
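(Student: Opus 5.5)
The plan is to encode the visit count multiplicatively in a scaled process and then combine the Optional Stopping Theorem (Proposition~\ref{thm:OST}) with Markov's inequality (Proposition~\ref{thm:markov}).

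\textbf{Construction.} Let $C_t := \sum_{s=0}^{t}\mathbf{1}_{\mathcal{T}}(\widetilde{X}_s)$ be the number of visits of the switched system up to time $t$, so $C_N = \widetilde{N}_{\mathcal{T}}(N,\omega)$. Define
\[
Z_t := \alpha^{-C_t}\, v(\widetilde{X}_t) \ge 0 .
\]
Then $Z_0 = v(X_0)\rho(X_0)$. The key algebraic identity is
\[
Z_{t+1} = \alpha^{-C_t}\,\alpha^{-\mathbf{1}_{\mathcal{T}}(\widetilde{X}_{t+1})}\, v(\widetilde{X}_{t+1}) = \alpha^{-C_t}\,\alpha^{-1}\,\widetilde{v}(\widetilde{X}_{t+1}).
\]
It holds because $\widetilde{v}=v$ on $\mathcal{T}$ and $\widetilde{v}=\alpha v$ off $\mathcal{T}$.

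\textbf{One-step drift.} Since $C_t$ is $\mathcal{F}_t$-measurable and the disturbances are i.i.d., conditioning on $\mathcal{F}_t$ reduces to conditioning on $\widetilde{X}_t$ (Markov property). Split on where $\widetilde{X}_t$ lies.
\begin{itemize}
\item If $\widetilde{X}_t \in \mathcal{X}$, condition \eqref{eq:mult_beta_drift} gives
\[
\mathbb{E}[Z_{t+1}\mid\mathcal{F}_t] \le \alpha^{-C_t}\alpha^{-1}\bigl(\alpha v(\widetilde{X}_t)+\beta\bigr) = Z_t + \tfrac{\beta}{\alpha}\alpha^{-C_t}.
\]
Using $C_t \le t+1$, this is at most $Z_t + \tfrac{\beta}{\alpha}\alpha^{-(t+1)}$.
\item If $\widetilde{X}_t \in \widetilde{\mathcal{X}}\setminus\mathcal{X}$, the state is frozen outside $\mathcal{T}\subseteq\mathcal{X}$. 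Hence $C_{t+1}=C_t$ and $Z_{t+1}=Z_t$, so the same bound holds trivially.
\end{itemize}
Integrability of each $Z_t$ follows by induction from this inequality, since $Z_0<\infty$. Consequently
\[
W_t := Z_t - \tfrac{\beta}{\alpha}\sum_{s=1}^{t}\alpha^{-s}
\]
is a supermartingale.

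\textbf{Stopping and Markov (finite horizon).} Let $\tau_k := \inf\{t : C_t \ge k\}$ be the time of the $k$-th visit, which is a stopping time. At $\tau_k$ we have $\widetilde{X}_{\tau_k}\in\mathcal{T}$, so $v\ge 1$ there and $Z_{\tau_k} \ge \alpha^{-k}$. Apply the bounded-time supermartingale case of Proposition~\ref{thm:OST} to $W$ at the stopping time $\tau_k\wedge N$. This gives
\[
\mathbb{E}[Z_{\tau_k\wedge N}] \le Z_0 + \tfrac{\beta}{\alpha}\,\mathbb{E}\Bigl[\textstyle\sum_{s=1}^{\tau_k\wedge N}\alpha^{-s}\Bigr] \le v(X_0)\rho(X_0) + \tfrac{\beta}{\alpha}\sum_{t=1}^{N}\alpha^{-t}.
\]
On the event $\{\widetilde{N}_{\mathcal{T}}(N,\omega)\ge k\}$ we have $\tau_k\le N$, hence $Z_{\tau_k\wedge N}\ge\alpha^{-k}$. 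Since $Z\ge 0$, Markov's inequality yields \eqref{eq:mult_beta_bound}.

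\textbf{Infinite horizon.} If $\beta=0$, then $Z_t$ itself is a nonnegative supermartingale. Part 1 of Proposition~\ref{thm:OST} with $\tau_k$ gives $\mathbb{E}[Z_{\tau_k}\mathbf{1}_{\{\tau_k<\infty\}}]\le Z_0$. Since $\{\widetilde{N}_{\mathcal{T}}(\infty,\omega)\ge k\}=\{\tau_k<\infty\}$, and $Z_{\tau_k}\ge\alpha^{-k}$ on this event, Markov's inequality yields \eqref{eq:mult_beta_bound_inf}.

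\textbf{Main obstacle.} The delicate step is the drift computation. One must check that the $\widetilde{v}$ construction exactly absorbs the extra factor $\alpha^{-1}$ incurred on a visit. One must also control the additive term: it is amplified by $\alpha^{-C_t}$, and the crude bound $C_t\le t+1$ is what produces the sum $\sum_{t=1}^{N}\alpha^{-t}$. A secondary point is that Markov's inequality must be applied at the stopped time $\tau_k\wedge N$ rather than at $N$, since $Z_N$ itself need not be large on the event of interest.
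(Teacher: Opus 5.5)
Your proposal is correct and follows the paper's proof essentially step for step. It uses the same scaled process $Z_t=\alpha^{-\widetilde{N}_{\mathcal{T}}(t,\omega)}v(\widetilde{X}_t)$, the same treatment of frozen sink states, optional stopping at $\tau_k\wedge N$ followed by Markov's inequality, and the nonnegative-supermartingale form of optional stopping for $\beta=0$. The only cosmetic difference is that you apply $C_t\le t+1$ before forming the compensator, so your correction term is deterministic, whereas the paper keeps the random drift $D_j=\frac{\beta}{\alpha}\alpha^{-\widetilde{N}_{\mathcal{T}}(j,\omega)}$ and bounds it only after stopping.
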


\begin{proof}
First, observe that the drift condition \eqref{eq:mult_beta_drift} extends to the entire switched domain $\widetilde{\mathcal{X}}$. For any state $X_t$ in the absorbing set $\widetilde{\mathcal{X}} \setminus \mathcal{X}$, the dynamics dictate $\widetilde{X}_{t+1} = X_t$. Since $\mathcal{T} \subseteq \mathcal{X}$, we have $\widetilde{X}_{t+1} \notin \mathcal{T}$, implying $\widetilde{v}(\widetilde{X}_{t+1}) = \alpha v(X_t)$. Thus, the condition $\mathbb{E}[\widetilde{v}(\widetilde{X}_{t+1}) \mid \widetilde{X}_t] = \alpha v(\widetilde{X}_t) \le \alpha v(\widetilde{X}_t) + \beta, \forall \widetilde{X}_t \in \widetilde{\mathcal{X}}$ holds trivially given $\beta \ge 0$.

\noindent\textbf{Step 1: Reformulate the drift condition.}
Define the state-dependent multiplier $\rho(x) := \alpha^{-1}$ if $x \in \mathcal{T}$, and $1$ otherwise.
The term $\widetilde{v}(\widetilde{X}_{t+1})$ can be factored as $\alpha \rho(\widetilde{X}_{t+1}) v(\widetilde{X}_{t+1})$. Dividing the barrier inequality by $\alpha$ yields the equivalent condition, we obtain
\begin{equation}
\label{eq:reformulated_drift}
\mathbb{E}\left[ \rho(\widetilde{X}_{t+1}) v(\widetilde{X}_{t+1}) \mid \widetilde{X}_t \right] \le v(\widetilde{X}_t) + \frac{\beta}{\alpha}, \quad \forall \widetilde{X}_t \in \widetilde{\mathcal{X}}.
\end{equation}

\noindent\textbf{Step 2: Define a scaled energy process.}
Define the stochastic process $\{Z_t\}_{t \ge 0}$ as $Z_t := v(\widetilde{X}_t) \prod_{s=0}^t \rho(\widetilde{X}_s)$.
Using the identity $\rho(x) = \alpha^{-\mathbf{1}_{\mathcal{T}}(x)}$, we have 
$\prod_{s=0}^t \rho(\widetilde{X}_s) = \alpha^{-\sum_{s=0}^t \mathbf{1}_{\mathcal{T}}(\widetilde{X}_s)} = \alpha^{-\widetilde{N}_{\mathcal{T}}(t,\omega)}$. Thus, $Z_t = \alpha^{-\widetilde{N}_{\mathcal{T}}(t,\omega)} v(\widetilde{X}_t)$. We now analyze the expected evolution of $Z_t$.
\begin{align*}
\mathbb{E}[Z_{t+1} \mid \mathcal{F}_t]= \left( \prod_{s=0}^t \rho(\widetilde{X}_s) \right) \mathbb{E}\left[ \rho(\widetilde{X}_{t+1}) v(\widetilde{X}_{t+1}) \mid \widetilde{X}_t \right].
\end{align*}
Applying \eqref{eq:reformulated_drift}, we have $\mathbb{E}[Z_{t+1} \mid \mathcal{F}_t] \le \alpha^{-\widetilde{N}_{\mathcal{T}}(t,\omega)} \left( v(\widetilde{X}_t) + \frac{\beta}{\alpha} \right) = Z_t + \frac{\beta}{\alpha} \alpha^{-\widetilde{N}_{\mathcal{T}}(t,\omega)}$.

\noindent\textbf{Step 3: Supermartingale bound.}
Define the process $M_t = Z_t - \sum_{j=0}^{t-1} D_j$, where $D_j = \frac{\beta}{\alpha} \alpha^{-\widetilde{N}_{\mathcal{T}}(j,\omega)}$. From the inequality above, $\mathbb{E}[M_{t+1} \mid \mathcal{F}_t] \le M_t$, so $M_t$ is a supermartingale.
We apply the Optional Stopping Theorem to the bounded stopping time $\tau = N \wedge \tau_k$, where \[\tau_k = \inf \{ t \ge 0 : \widetilde{N}_{\mathcal{T}}(t,\omega) = k \},\] to obtain $\mathbb{E}[Z_\tau] \le \mathbb{E}[Z_0] + \mathbb{E}\left[ \sum_{j=0}^{\tau-1} \frac{\beta}{\alpha} \alpha^{-\widetilde{N}_{\mathcal{T}}(j,\omega)} \right]$. To obtain a uniform upper bound, we maximize the drift term. The occupation count satisfies $\widetilde{N}_{\mathcal{T}}(j,\omega) \le j+1$ (at most 1 visit per step). Since $\alpha < 1$, the function $\alpha^{-x}$ is increasing, so $\alpha^{-\widetilde{N}_{\mathcal{T}}(j,\omega)} \le \alpha^{-(j+1)}$. Additionally, $\tau \le N$. Thus, $\sum_{j=0}^{\tau-1} \frac{\beta}{\alpha} \alpha^{-\widetilde{N}_{\mathcal{T}}(j,\omega)} \le \sum_{j=0}^{N-1} \frac{\beta}{\alpha} \alpha^{-(j+1)} = \frac{\beta}{\alpha} \sum_{t=1}^{N} \alpha^{-t}$ holds. Substituting this into the expectation (and noting $Z_0 = v(X_0)\rho(X_0)$), we have $\mathbb{E}[Z_\tau] \le v(X_0)\rho(X_0) + \frac{\beta}{\alpha} \sum_{t=1}^{N} \alpha^{-t}$.

\noindent\textbf{Step 4: Markov's Inequality.}
Consider the event $E_k = \{ \widetilde{N}_{\mathcal{T}}(N,\omega) \ge k \}$. On this event, the stopping time satisfies $\tau_k \le N$, so $\tau = \tau_k$. At this instant, the cumulative visit count is exactly $k$, i.e., $\widetilde{N}_{\mathcal{T}}(\tau_k,\omega) = k$. Furthermore, since the counter incremented at this step, we must have $\widetilde{X}_{\tau_k} \in \mathcal{T}$. By condition (2), $v(\widetilde{X}_{\tau_k}) \ge 1$. Thus, the value of the process is lower-bounded on this event:
\[
Z_\tau = \alpha^{-\widetilde{N}_{\mathcal{T}}(\tau_k,\omega)} v(\widetilde{X}_{\tau_k}) = \alpha^{-k} v(\widetilde{X}_{\tau_k}) \ge \alpha^{-k}.
\]
Since $Z_\tau$ is a non-negative random variable (as $v \ge 0$), we can apply Markov's inequality to obtain $\mathbb{P}(E_k) \le \mathbb{P}(Z_\tau \ge \alpha^{-k}) \le \frac{\mathbb{E}[Z_\tau]}{\alpha^{-k}} = \alpha^k \mathbb{E}[Z_\tau]$.
Substituting the upper bound for $\mathbb{E}[Z_\tau]$ derived in Step 3 yields the result for finite $N$.

\noindent\textbf{Step 5: Infinite Horizon Case ($\beta=0$).}
If $\beta=0$, the drift terms vanish, and $Z_t$ is a non-negative supermartingale. By the Optional Stopping Theorem for non-negative supermartingales (valid for unbounded $\tau_k$), we have
$\mathbb{E}[Z_{\tau_k} \mathbf{1}_{\{\tau_k < \infty\}}] \le \mathbb{E}[Z_0]$. On the event $\{\tau_k < \infty\}$ (equivalent to reaching count $k$ eventually), $Z_{\tau_k} \ge \alpha^{-k}$.
\[
\alpha^{-k} \mathbb{P}(\tau_k < \infty) \le v(X_0)\rho(X_0) \implies \mathbb{P}(\widetilde{N}_{\mathcal{T}}(\infty,\omega) \ge k) \le v(X_0)\rho(X_0)\alpha^k.
\]
This completes the proof. \qed
\end{proof}

In Theorem \ref{thm:mult_beta_occupation}, the additive drift parameter $\beta$ critically governs the asymptotic behavior of the upper bound. Since the contraction rate satisfies $\alpha \in (0,1)$, the cumulative drift term $\sum_{t=1}^{N} \alpha^{-t}$ grows geometrically with the horizon $N$. Specifically, the bound contains a term scaling as $\frac{\beta}{\alpha} \sum_{t=1}^{N} \alpha^{-t} \cdot \alpha^k \approx \mathcal{O}\left(\beta \cdot \alpha^{-(N-k)}\right)$. Consequently, any strictly positive drift $\beta > 0$ causes the bound to diverge as $N \to \infty$, reflecting the physical intuition that persistent noise in an unstable system may eventually drive the state into the target region given infinite time. Therefore, for infinite-horizon verification ($N=\infty$), Theorem \ref{thm:mult_beta_occupation} requires $\beta = 0$, in which case the bound simplifies to the strictly contractive form $v(x_0)\rho(x_0)\alpha^k$. For finite horizons, $\beta > 0$ is permissible, providing meaningful bounds that degrade gracefully as the horizon extends.

\subsection{Lower Bounds via Attractive Barriers}
\label{sub:sclb1}

In this subsection, we establish sufficient conditions to lower-bound the probability that the constrained occupation time exceeds a threshold $k$.

Theorem \ref{thm:attractive_mult_occupation_bound} presents a lower bound on the probability that the switched system \eqref{sdss1} visits the target $\mathcal{T}$ at least $k$ times within a finite horizon $N$. By Lemma \ref{lema1}, this corresponds directly to the probability that the original system \eqref{system} achieves at least $k$ visits before violating safety constraints. The core mechanism relies on an \textit{attractive barrier} $v(x)$, which quantifies the system's ``potential'' to visit the target. The barrier condition \eqref{eq:lower_drift_active_final} ensures that this potential is amplified by a factor $\alpha > 1$ upon every visit to $\mathcal{T}$, while strictly penalizing the system if it enters an absorbing sink state. The proof formalizes this dynamic by constructing a scaled process $Z_t = \alpha^{-(t-\widetilde{N}_{\mathcal{T}}(t,\omega))}v(X_t)$. Here, the cumulative visit count $\widetilde{N}_{\mathcal{T}}(t,\omega)$ in the exponent counteracts the natural time decay $\alpha^{-t}$. Consequently, frequent visits sustain the value of $Z_t$, whereas rare visits cause it to decay exponentially. This structure renders the (drift-adjusted) process a submartingale. By applying the Optional Stopping Theorem, we relate the initial potential $v(X_0)$ to the expected terminal state.

\begin{theorem}[Lower Bounds on Constrained Occupation Time]
\label{thm:attractive_mult_occupation_bound}
 Suppose there exist a measurable function $v: \widetilde{\mathcal{X}} \to \mathbb{R}$ and constants $\alpha > 1$ and $\beta \le 0$ such that $v$ is bounded on $\widetilde{\mathcal{X}}$, with $M := \sup_{X \in \widetilde{\mathcal{X}}} |v(X)| < \infty$, and satisfies the following conditions:

\begin{enumerate}
    \item \textbf{Attractive Barrier Condition:} For all $X_t \in \mathcal{X}$:
    \begin{equation}
    \label{eq:lower_drift_active_final}
    \mathbb{E}\left[ \alpha \mathbf{1}_{\mathcal{T}}(X_{t+1}) v(X_{t+1}) + \mathbf{1}_{\widetilde{\mathcal{X}} \setminus \mathcal{T}}(X_{t+1}) v(X_{t+1}) \;\middle|\; X_t \right] \ge \alpha v(X_t) + \beta.
    \end{equation}
    \item \textbf{Bound on the target:} \qquad $v(x) \le 1, \qquad \forall x \in \mathcal{T}$.
    \item \textbf{Sink Condition:} \qquad $v(x) \le -\frac{\beta}{\alpha - 1}, \qquad \forall x \in \widetilde{\mathcal{X}} \setminus \mathcal{X}$.
\end{enumerate}

Let $\rho(X) = \alpha$ if $X \in \mathcal{T}$ and $\rho(X) = 1$ otherwise.
For any initial condition $X_0 \in \mathcal{X}$, horizon $N \ge 1$, and target threshold $k \ge 1$, assume $k - \mathbf{1}_{\mathcal{T}}(X_0) \le N < \infty$. The probability that the constrained occupation time $\widetilde{N}_{\mathcal{T}}(N,\omega) \ge k$ satisfies
\begin{equation}
\label{eq:occupation_lower_bound_final}
\mathbb{P}\left( \widetilde{N}_{\mathcal{T}}(N,\omega) \ge k \right)
\;\ge\;
\frac{
v(X_0)\rho(X_0) + \Lambda(\beta, N, k') - M \alpha^{-(N - k + 1)}
}{
\rho(X_0) - M \alpha^{-(N - k + 1)}
},
\end{equation}
provided $\rho(X_0) > M \alpha^{-(N - k + 1)}$ and $X_0 \in \mathcal{X}$, where $\Lambda(\beta, N, k')=\beta N$.
\end{theorem}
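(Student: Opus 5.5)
The plan is to mirror the proof of Theorem~\ref{thm:mult_beta_occupation}, replacing the supermartingale by a drift-adjusted submartingale. The process is
\[
Z_t := \alpha^{-t} v(\widetilde{X}_t)\prod_{s=0}^{t}\rho(\widetilde{X}_s) = \alpha^{-(t-\widetilde{N}_{\mathcal{T}}(t,\omega))}\, v(\widetilde{X}_t).
\]
It is stopped at the bounded time $\tau = N\wedge\tau_k$, with $\tau_k=\inf\{t\ge 0:\widetilde{N}_{\mathcal{T}}(t,\omega)=k\}$ as before. The expectation $\mathbb{E}[Z_\tau]$ will be bounded from below by the Optional Stopping Theorem (Proposition~\ref{thm:OST}, submartingale case) and from above by splitting on the event $E_k=\{\widetilde{N}_{\mathcal{T}}(N,\omega)\ge k\}$. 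Comparing the two bounds gives \eqref{eq:occupation_lower_bound_final}.

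\textbf{Step 1 (drift on all of $\widetilde{\mathcal{X}}$).} The left-hand side of \eqref{eq:lower_drift_active_final} equals $\mathbb{E}[\rho(\widetilde{X}_{t+1})v(\widetilde{X}_{t+1})\mid \widetilde{X}_t]$. On $\mathcal{X}$ the condition therefore reads $\mathbb{E}[\rho v(\widetilde{X}_{t+1})\mid\widetilde{X}_t]\ge \alpha v(\widetilde{X}_t)+\beta$.

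On the absorbing set $\widetilde{\mathcal{X}}\setminus\mathcal{X}$ we have $\widetilde{X}_{t+1}=\widetilde{X}_t\notin\mathcal{T}$. The left side is then $v(\widetilde{X}_t)$, and the inequality $v\ge \alpha v+\beta$ is equivalent to $v\le -\beta/(\alpha-1)$. This is exactly the sink condition, so the drift inequality holds on all of $\widetilde{\mathcal{X}}$.

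\textbf{Step 2 (submartingale).} Multiplying by $\alpha^{-(t+1)}\prod_{s\le t}\rho(\widetilde{X}_s)$ gives
\[
\mathbb{E}[Z_{t+1}\mid\mathcal{F}_t]\ge Z_t+\beta\,\alpha^{-(t+1-\widetilde{N}_{\mathcal{T}}(t,\omega))}.
\]
Since $\widetilde{N}_{\mathcal{T}}(t,\omega)\le t+1$, the exponent is nonnegative, so the factor $\alpha^{-(\cdot)}$ lies in $(0,1]$. Because $\beta\le 0$, the added term is at least $\beta$. Hence $Z_t-\beta t$ is a submartingale; integrability follows from $|v|\le M$ and the boundedness of the product up to time $t$.

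Optional stopping at the bounded time $\tau\le N$ then gives
\[
\mathbb{E}[Z_\tau]\ge Z_0+\beta\,\mathbb{E}[\tau]\ge v(X_0)\rho(X_0)+\beta N,
\]
where the last step uses $\beta\le 0$ and $\tau\le N$.

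\textbf{Step 3 (upper bound on $\mathbb{E}[Z_\tau]$).} I expect this step to be the main obstacle, since it is where the exponents must be tracked exactly.

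On $E_k$ we have $\tau=\tau_k$, $\widetilde{X}_{\tau_k}\in\mathcal{T}$ and $\widetilde{N}_{\mathcal{T}}(\tau_k,\omega)=k$, so $Z_\tau=\alpha^{k-\tau_k}v(\widetilde{X}_{\tau_k})\le \alpha^{k-\tau_k}$. I will show $\alpha^{k-\tau_k}\le\rho(X_0)$ by cases:
\begin{itemize}
\item if $X_0\in\mathcal{T}$, the $k$ visits need at least $k$ time indices, so $\tau_k\ge k-1$ and $\alpha^{k-\tau_k}\le\alpha=\rho(X_0)$;
\item if $X_0\notin\mathcal{T}$, then $\tau_k\ge k$ and $\alpha^{k-\tau_k}\le 1=\rho(X_0)$.
\end{itemize}
Negative values of $v$ only help, so $Z_\tau\le\rho(X_0)$ on $E_k$.

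On $E_k^c$ we have $\tau=N$ and $\widetilde{N}_{\mathcal{T}}(N,\omega)\le k-1$. Hence $Z_N\le \alpha^{-(N-k+1)}M$, using $N-k+1\ge 0$, which the hypothesis $k-\mathbf{1}_{\mathcal{T}}(X_0)\le N$ ensures.

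Combining the two cases,
\[
\mathbb{E}[Z_\tau]\le \rho(X_0)\,\mathbb{P}(E_k)+M\alpha^{-(N-k+1)}\bigl(1-\mathbb{P}(E_k)\bigr).
\]

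\textbf{Step 4 (conclusion).} Chaining Steps 2 and 3 yields
\[
\mathbb{P}(E_k)\bigl(\rho(X_0)-M\alpha^{-(N-k+1)}\bigr)\ge v(X_0)\rho(X_0)+\beta N-M\alpha^{-(N-k+1)}.
\]
Dividing by the positive quantity $\rho(X_0)-M\alpha^{-(N-k+1)}$, which is positive by the proviso, gives \eqref{eq:occupation_lower_bound_final} with $\Lambda=\beta N$.
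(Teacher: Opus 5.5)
Your proposal is correct and follows the paper's proof essentially step for step. It uses the same scaled process $Z_t=\alpha^{-(t-\widetilde{N}_{\mathcal{T}}(t,\omega))}v(\widetilde{X}_t)$, the sink condition to extend the drift inequality to $\widetilde{\mathcal{X}}\setminus\mathcal{X}$, the drift-adjusted submartingale $Z_t-\beta t$ stopped at $\tau_k\wedge N$, and the same success/failure decomposition. Your case split for $\alpha^{k-\tau_k}\le\rho(X_0)$ just spells out the paper's remark that $\tau_k\ge k'$, and your appeal to $N-k+1\ge 0$ on $E_k^c$ is harmless but not actually needed.
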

\begin{proof}
Combining \eqref{eq:lower_drift_active_final} and the sink condition, we have \[\mathbb{E}\left[ \rho(\widetilde{X}_{t+1}) v(\widetilde{X}_{t+1}) \;\middle|\; \widetilde{X}_t \right] \ge \alpha v(\widetilde{X}_t) + \beta, \forall \widetilde{X}_t \in \widetilde{\mathcal{X}}.\]

Define the scaled process $\{Z_t\}_{t \ge 0}$ adapted to $\mathcal{F}_t$:
\[
Z_t := \alpha^{-t} \left( \prod_{s=0}^{t} \rho(\widetilde{X}_s) \right) v(\widetilde{X}_t)
= \alpha^{-(t - \widetilde{N}_{\mathcal{T}}(t,\omega))} v(\widetilde{X}_t).
\]

\noindent\textbf{Step 1: Submartingale Construction.}
Compute the conditional expectation:
\[
\begin{aligned}
\mathbb{E}[Z_{t+1} \mid \mathcal{F}_t] &= \alpha^{-(t+1)} \left(\prod_{s=0}^t \rho(\widetilde{X}_s)\right) \mathbb{E}[\rho(\widetilde{X}_{t+1}) v(\widetilde{X}_{t+1}) \mid \widetilde{X}_t] \\
&\ge \alpha^{-(t+1)} \alpha^{\widetilde{N}_{\mathcal{T}}(t,\omega)} (\alpha v(\widetilde{X}_t) + \beta)= Z_t + \beta \alpha^{-(t+1)} \alpha^{\widetilde{N}_{\mathcal{T}}(t,\omega)}.
\end{aligned}
\]
We now verify the submartingale property by bounding the drift term $D_t = \beta \alpha^{-(t+1)} \alpha^{\widetilde{N}_{\mathcal{T}}(t,\omega)}$. We must lower-bound the negative term. The maximum possible occupation count at time $t$ is $t+1$. Thus, $\alpha^{\widetilde{N}_{\mathcal{T}}(t,\omega)} \le \alpha^{t+1}$.
    Consequently, $\alpha^{-(t+1)} \alpha^{\widetilde{N}_{\mathcal{T}}(t,\omega)} \le \alpha^{-(t+1)} \alpha^{t+1} = 1$.  Multiplying by the non-positive constant $\beta$ reverses the inequality, we have \[D_t = \beta \left( \alpha^{-(t+1)} \alpha^{\widetilde{N}_{\mathcal{T}}(t,\omega)} \right) \ge \beta \cdot 1 = \beta.\]

Let $\Delta_t = \beta$. We have $\mathbb{E}[Z_{t+1} \mid \mathcal{F}_t] \ge Z_t + \Delta_t$.
The process $Y_t := Z_t - \sum_{j=0}^{t-1} \Delta_j$ is a submartingale.

\noindent\textbf{Step 2: Optional Stopping.}
Let $\tau_k := \inf \{t \ge 0 : \widetilde{N}_{\mathcal{T}}(t,\omega) = k\}$ and $\tau := \tau_k \wedge N$. Since $\tau \le N$ is bounded, we obtain from Proposition \ref{thm:OST} that
$\mathbb{E}[Z_\tau] \ge Z_0 + \mathbb{E}\left[ \sum_{j=0}^{\tau-1} \Delta_j \right]$.

We lower-bound the sum term $\Sigma_\tau = \sum_{j=0}^{\tau-1} \Delta_j$. Here $\Delta_j = \beta$. The sum is $\beta \tau$. Since $\beta$ is non-positive, we need an upper bound on $\tau$ to minimize the sum. Since $\tau \le N$, 
    $\Sigma_\tau = \beta \tau \ge \beta N$ holds.  Combining these cases into the definition of $\Lambda(\beta, N, k')$, we have $\mathbb{E}[Z_\tau] \ge v(X_0)\rho(X_0) + \Lambda(\beta, N, k')$.

\noindent\textbf{Step 3: Success--Failure Decomposition.}
Let $p := \mathbb{P}(\widetilde{N}_{\mathcal{T}}(N,\omega) \ge k)$.
\begin{itemize}
    \item \textbf{($\tau = \tau_k \le N$):}
    $Z_{\tau_k} \le \alpha^{\mathbf{1}_{\mathcal{T}}(X_0)} = \rho(X_0)$ (since $\tau_k\geq k'$ and $v(\widetilde{X}_{\tau_k})\leq 1$).
    
    \item \textbf{($\tau = N$):}
    $Z_N \le M \alpha^{-(N - k + 1)}$ (since $\widetilde{N}_{\mathcal{T}}(N,\omega)\leq k-1$ and $v(\widetilde{X})\leq M$, $\forall X \in \widetilde{\mathcal{X}}$).
\end{itemize}
Combining these, we have $\mathbb{E}[Z_\tau] \le p \cdot \rho(X_0) + (1 - p) M \alpha^{-(N - k + 1)}$.

\noindent\textbf{Step 4: Solve for $p$.}
Combining the lower and upper expectations, we obtain
\[
p \rho(X_0) + (1 - p) M \alpha^{-(N - k + 1)} \ge v(X_0)\rho(X_0) + \Lambda(\beta, N, k').
\]
Rearranging for $p$, we have $p \left( \rho(X_0) - M \alpha^{-(N - k + 1)} \right) \ge v(X_0)\rho(X_0) + \Lambda(\beta, N, k') - M \alpha^{-(N - k + 1)}$.
Dividing by $\rho(X_0) - M \alpha^{-(N - k + 1)}$ yields the result. \qed
\end{proof}

In Theorem \ref{thm:attractive_mult_occupation_bound}, we restrict our analysis to $\beta \le 0$. A positive drift $\beta > 0$ is structurally incompatible with the boundedness of the barrier function in this framework. Specifically, the attractive barrier condition with $\beta > 0$ would require the barrier value to grow indefinitely, contradicting the requirement that $v(x)$ is bounded (e.g., $v \le 1$ on the target). Therefore, no valid bounded barrier exists for $\beta > 0$ in this strict formulation. We must instead choose $\beta \le 0$, which introduces a trade-off: a larger negative drift simplifies finding a barrier but penalizes the probability bound as the horizon increases.
\begin{itemize}
    \item \textbf{Finite Horizons ($\beta < 0$):} The regime $\beta < 0$ is permissible for short-horizon tasks. It admits a broader class of candidate functions at the cost of a penalty term $\beta N$ that reduces the lower bound linearly with $N$.
    \item \textbf{Infinite Horizons ($\beta = 0$):} As $N \to \infty$, the penalty $\beta N$ diverges to $-\infty$ if $\beta < 0$, rendering the bound vacuous. Therefore, the strict submartingale regime $\beta = 0$ is essential and unique for long-horizon guarantees. It eliminates the additive drift entirely, relying solely on the geometric contraction $\alpha$ to certify repeated intersection.
\end{itemize}

We now extend the result to the infinite horizon ($N \to \infty$). Since the constrained occupation event is monotonic (if you visit $k$ times by horizon $N$, you also do so for any $N' > N$), we can derive the infinite-horizon bound as a limit of Theorem \ref{thm:attractive_mult_occupation_bound}. This yields a closed-form lower bound valid for the case $\beta = 0$.

\begin{corollary}[Infinite-Horizon Lower Bounds]
\label{cor:infinite_lower_bound}
Consider the setup of Theorem \ref{thm:attractive_mult_occupation_bound} with $\alpha > 1$. To obtain a non-trivial bound as $N \to \infty$, we must select $\beta = 0$. The probability of visiting the target set $\mathcal{T}$ at least $k$ times over the infinite horizon satisfies:
\begin{equation}
\label{eq:infinite_lower_bound}
\mathbb{P}\big( \widetilde{N}_{\mathcal{T}}(\infty,\omega) \ge k \big)
\;\ge\;
v(X_0).
\end{equation}
\end{corollary}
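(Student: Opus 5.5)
The corollary follows by letting $N \to \infty$ in Theorem \ref{thm:attractive_mult_occupation_bound} with $\beta = 0$, combined with monotonicity of the events in $N$.

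\textbf{Step 1: monotone limit.} Fix $k \ge 1$ and $X_0 \in \mathcal{X}$, and set $\beta = 0$, so that $\Lambda(0,N,k') = 0$ for every $N$. The counts $\widetilde{N}_{\mathcal{T}}(N,\omega)$ are nondecreasing in $N$. Hence the events $\{\widetilde{N}_{\mathcal{T}}(N,\omega) \ge k\}$ increase in $N$, and their union is $\{\widetilde{N}_{\mathcal{T}}(\infty,\omega) \ge k\}$. Continuity of $\mathbb{P}$ from below then gives
\[
\mathbb{P}\big(\widetilde{N}_{\mathcal{T}}(\infty,\omega)\ge k\big)=\lim_{N\to\infty}\mathbb{P}\big(\widetilde{N}_{\mathcal{T}}(N,\omega)\ge k\big).
\]

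\textbf{Step 2: apply the theorem for large $N$.} Since $\alpha > 1$ and $M < \infty$, we have $\epsilon_N := M\alpha^{-(N-k+1)} \to 0$. Because $\rho(X_0) \ge 1 > 0$, the proviso $\rho(X_0) > \epsilon_N$ holds for all sufficiently large $N$, and so does $k - \mathbf{1}_{\mathcal{T}}(X_0) \le N$. For such $N$, inequality \eqref{eq:occupation_lower_bound_final} reads
\[
\mathbb{P}\big(\widetilde{N}_{\mathcal{T}}(N,\omega)\ge k\big)\ge \frac{v(X_0)\rho(X_0)-\epsilon_N}{\rho(X_0)-\epsilon_N}.
\]
Letting $N \to \infty$, the right-hand side tends to $v(X_0)\rho(X_0)/\rho(X_0) = v(X_0)$. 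Combining this with Step 1 yields \eqref{eq:infinite_lower_bound}.

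\textbf{Step 3: the choice of $\beta$.} I would add a remark on why $\beta = 0$ is needed. If $\beta < 0$, the term $\Lambda = \beta N$ tends to $-\infty$. The lower bound then eventually becomes negative and is vacuous in the limit.

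\textbf{Expected obstacle.} The only delicate point is using the limit correctly. We cannot take the limit of the bound directly as a bound on the infinite-horizon probability. Instead, we need that each finite-$N$ probability is at most the infinite-horizon one, which follows from the monotone inclusion of events. Given that, passing to the limit inferior of the bounds suffices, and continuity of measure is not even necessary. I would also check that the hypotheses of Theorem \ref{thm:attractive_mult_occupation_bound}, in particular the sink condition with $\beta = 0$ (namely $v \le 0$ outside $\mathcal{X}$), are inherited unchanged, since they do not depend on $N$.
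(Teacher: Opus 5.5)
Your proposal is correct and follows essentially the same route as the paper's appendix proof. Both use the nested events $\{\widetilde{N}_{\mathcal{T}}(N,\omega)\ge k\}$ and continuity of probability, apply the finite-horizon bound of Theorem~\ref{thm:attractive_mult_occupation_bound} with $\beta=0$, and use $M\alpha^{-(N-k+1)}\to 0$ to show the bound tends to $v(X_0)$; your version is slightly cleaner, since fixing $\beta=0$ at the outset and checking the proviso $\rho(X_0)>M\alpha^{-(N-k+1)}$ for large $N$ avoids the paper's mixing of the indices $j$ and $N$ in the limit.
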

\begin{proof}
    The proof is shown in Appendix.
\end{proof}

In the infinite-horizon limit, the probability bound shifts from tracking a specific visit count, $k$, to measuring the likelihood of indefinite recurrence. Although the attraction factor $\alpha$ vanishes from the final expression, it remains the structural `engine' that dictates the barrier’s steepness and the strength of the guarantee. However, this reliance on strict geometric contraction renders the framework inherently conservative. The proposed `weighted' framework below resolves this by applying a stronger amplification ($\alpha^2$) within the target region. This modification creates the necessary mathematical 'headroom' to accommodate positive drift, ensuring the system remains attracted to the target despite the outward momentum

\begin{theorem}[Occupation-Time Lower Bounds via Weighted Attractive Barriers]
\label{thm:relaxed_occupation_bound}
Suppose there exists a measurable function $v: \widetilde{\mathcal{X}} \to \mathbb{R}$ and constants $\alpha > 1$, $\beta \ge 0$, such that $v$ is bounded on $\widetilde{\mathcal{X}}$, i.e., $M := \sup_{X \in \widetilde{\mathcal{X}}} |v(X)| < \infty$, and satisfies the following conditions:

\begin{enumerate}
    \item \textbf{Weighted Attractive Barrier Condition:} For all $X_t \in \mathcal{X}$:
    \begin{equation}
    \label{eq:relaxed_drift}
    \mathbb{E}\left[ \alpha^2 \mathbf{1}_{\mathcal{T}}(X_{t+1}) v(X_{t+1}) + \mathbf{1}_{\widetilde{\mathcal{X}} \setminus \mathcal{T}}(X_{t+1}) v(X_{t+1}) \;\middle|\; X_t \right] \ge \alpha v(X_t) + \beta.
    \end{equation}
    \item \textbf{Bound on the target:} \qquad $v(x) \le 1, \qquad \forall x \in \mathcal{T}$.
    \item \textbf{Sink Condition:} \qquad $v(x) \le -\frac{\beta}{\alpha - 1}, \qquad \forall x \in \widetilde{\mathcal{X}} \setminus \mathcal{X}$.
\end{enumerate}

Let the weighting function be $\rho(X) = \alpha^2$ if $X \in \mathcal{T}$ and $\rho(X) = 1$ otherwise.
For any initial condition $X_0 \in \mathcal{X}$, horizon $N \ge 1$, and target threshold $k \ge 1$, let $k' = k - \mathbf{1}_{\mathcal{T}}(X_0)$. Assume $k' \le N < \infty$. Define the accumulated drift term as $\Lambda(\beta, k') := \frac{\beta}{\alpha - 1}(1 - \alpha^{-k'})$. Then the probability that the constrained occupation time $\widetilde{N}_{\mathcal{T}}(N,\omega) \ge k$ satisfies
\begin{equation}
\label{eq:relaxed_lower_bound}
\mathbb{P}\left( \widetilde{N}_{\mathcal{T}}(N,\omega) \ge k \right)
\;\ge\;
\frac{
v(X_0)\rho(X_0) + \Lambda(\beta, k') - M \alpha^{-(N - 2k + 2)}
}{
\alpha^{k + \mathbf{1}_{\mathcal{T}}(X_0)} - M \alpha^{-(N - 2k + 2)}
},
\end{equation}
provided the denominator is positive.
\end{theorem}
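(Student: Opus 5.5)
The plan is to follow the template of the proof of Theorem \ref{thm:attractive_mult_occupation_bound}, replacing the weighting $\alpha$ on the target by $\alpha^2$.

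\textbf{Step 0: extend the drift condition to the whole domain.} First I will show that \eqref{eq:relaxed_drift} together with the sink condition gives
\[
\mathbb{E}[\rho(\widetilde{X}_{t+1})v(\widetilde{X}_{t+1})\mid \widetilde{X}_t]\ge \alpha v(\widetilde{X}_t)+\beta
\]
for every $\widetilde{X}_t\in\widetilde{\mathcal{X}}$. On $\mathcal{X}$ this is exactly \eqref{eq:relaxed_drift}. On the absorbing part $\widetilde{X}_{t+1}=\widetilde{X}_t\notin\mathcal{T}$, so the inequality reduces to $v\ge\alpha v+\beta$. This is equivalent to $v\le -\beta/(\alpha-1)$, which is the sink condition.

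\textbf{Step 1: the scaled process and its drift.} I will define
\[
Z_t:=\alpha^{-t}\prod_{s=0}^t\rho(\widetilde{X}_s)\,v(\widetilde{X}_t)=\alpha^{-t+2\widetilde{N}_{\mathcal{T}}(t,\omega)}v(\widetilde{X}_t).
\]
Each $Z_t$ is bounded, since $v$ is bounded and $t\le N$, so integrability is automatic. Conditioning as in the previous theorem gives $\mathbb{E}[Z_{t+1}\mid\mathcal{F}_t]\ge Z_t+D_t$ with $D_t=\beta\alpha^{-(t+1)+2\widetilde{N}_{\mathcal{T}}(t,\omega)}$. Since $\beta\ge0$ and $\widetilde{N}_{\mathcal{T}}\ge 0$, we have $D_t\ge \beta\alpha^{-(t+1)}\ge 0$. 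Hence $Y_t:=Z_t-\sum_{j<t}D_j$ is a submartingale.

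\textbf{Step 2: optional stopping.} I will apply Proposition \ref{thm:OST} (submartingale case) at the bounded time $\tau=\tau_k\wedge N$. This gives $\mathbb{E}[Z_\tau]\ge v(X_0)\rho(X_0)+\mathbb{E}[\sum_{j<\tau}D_j]$.

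To lower-bound the drift sum, I will use that counting $k$ visits requires $\tau_k\ge k-1$ when $X_0\in\mathcal{T}$ and $\tau_k\ge k$ otherwise, i.e.\ $\tau_k\ge k'$. Together with the assumption $N\ge k'$ this gives $\tau\ge k'$. Discarding the nonnegative terms with $j\ge k'$ yields
\[
\sum_{j<\tau}D_j\ge\beta\sum_{j=1}^{k'}\alpha^{-j}=\Lambda(\beta,k').
\]

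\textbf{Step 3: success--failure decomposition.} I will split on whether the $k$-th visit happens by time $N$.

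On success, $Z_{\tau_k}=\alpha^{2k-\tau_k}v(\widetilde{X}_{\tau_k})$. If $v(\widetilde{X}_{\tau_k})<0$, then $Z_{\tau_k}$ is negative and trivially below the bound. Otherwise $v(\widetilde{X}_{\tau_k})\le 1$ by the target bound, and $\tau_k\ge k'$, so
\[
Z_{\tau_k}\le\alpha^{2k-k'}=\alpha^{k+\mathbf{1}_{\mathcal{T}}(X_0)}.
\]
On failure, $\tau=N$ and $\widetilde{N}_{\mathcal{T}}(N,\omega)\le k-1$, so $Z_N\le M\alpha^{-N+2k-2}=M\alpha^{-(N-2k+2)}$.

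Writing $p$ for the target probability, this gives
\[
p\,\alpha^{k+\mathbf{1}_{\mathcal{T}}(X_0)}+(1-p)M\alpha^{-(N-2k+2)}\ge v(X_0)\rho(X_0)+\Lambda(\beta,k').
\]
Rearranging and dividing by the denominator, which is assumed positive, yields \eqref{eq:relaxed_lower_bound}.

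\textbf{Main obstacle.} The delicate step is the success-case upper bound. The factor $\alpha^{2\widetilde{N}_{\mathcal{T}}}$ no longer cancels against $\alpha^{-t}$ as it did in Theorem \ref{thm:attractive_mult_occupation_bound}, so the bound on $Z_{\tau_k}$ must come from the minimal possible value of $\tau_k$. This requires care with the sign of $v$ and with the special case $X_0\in\mathcal{T}$. The drift bound via $\tau\ge k'$ is the second point needing care, because it uses the assumption $N\ge k'$ in order to drop terms safely.
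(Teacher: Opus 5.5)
Your proposal is correct and follows essentially the same route as the paper's proof. The paper likewise extends the drift condition to the sink via condition 3, uses the scaled process $Z_t=\alpha^{-t+2\widetilde{N}_{\mathcal{T}}(t,\omega)}v(\widetilde{X}_t)$ with drift at least $\beta\alpha^{-(t+1)}$, applies optional stopping at $\tau_k\wedge N$ with $\tau\ge k'$ to obtain $\Lambda(\beta,k')$, and finishes with the same success--failure decomposition; your explicit handling of $v(\widetilde{X}_{\tau_k})<0$ in the success bound fills in a sign detail the paper leaves implicit.
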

\begin{proof}
    The proof is similar to that of Theorem \ref{thm:attractive_mult_occupation_bound}, which is shown in Appendix.
\end{proof}

\begin{corollary}[Infinite-Horizon Lower Bounds]
\label{cor:relaxed_infinite_bound} Consider the setup of Theorem \ref{thm:relaxed_occupation_bound} with $\alpha > 1$. The probability of visiting the target set $\mathcal{T}$ at least $k$ times over the infinite horizon satisfies
\begin{equation}
\label{eq:relaxed_infinite_bound}
\mathbb{P}\big( \widetilde{N}_{\mathcal{T}}(\infty,\omega) \ge k \big)
\;\ge\;
\frac{v(X_0)\rho(X_0) + \Lambda(\beta, k')}{\alpha^{k + \mathbf{1}_{\mathcal{T}}(X_0)}}.
\end{equation}
\end{corollary}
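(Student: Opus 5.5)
The plan is to obtain the corollary as the limit $N \to \infty$ of the finite-horizon bound \eqref{eq:relaxed_lower_bound} in Theorem \ref{thm:relaxed_occupation_bound}. The two ingredients are the monotonicity of the occupation event in the horizon and the decay of the error term $M\alpha^{-(N-2k+2)}$. Throughout, $X_0 \in \mathcal{X}$, $k \ge 1$ and $\alpha > 1$ are fixed, and $k' = k - \mathbf{1}_{\mathcal{T}}(X_0)$.

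\textbf{Monotonicity.} Since $\widetilde{N}_{\mathcal{T}}(N,\omega)$ is a partial sum of nonnegative terms, it is nondecreasing in $N$. Hence the events $E_N := \{\widetilde{N}_{\mathcal{T}}(N,\omega) \ge k\}$ are increasing in $N$. Their union is $\{\widetilde{N}_{\mathcal{T}}(\infty,\omega) \ge k\}$: if the total count is at least $k$, the $k$-th visit happens at some finite time $t$, so $\omega \in E_t$. Continuity of $\mathbb{P}$ from below then gives
\[
\mathbb{P}\big(\widetilde{N}_{\mathcal{T}}(\infty,\omega) \ge k\big) = \lim_{N\to\infty} \mathbb{P}\big(\widetilde{N}_{\mathcal{T}}(N,\omega) \ge k\big).
\]
In particular, the infinite-horizon probability is at least every finite-horizon probability.

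\textbf{Passage to the limit.} For all $N \ge k'$, the hypotheses of Theorem \ref{thm:relaxed_occupation_bound} hold, because $v$, $\alpha$ and $\beta$ do not depend on $N$. Write $\varepsilon_N := M\alpha^{-(N-2k+2)}$. Since $\alpha > 1$ and $k$ is fixed, $\varepsilon_N \to 0$ as $N \to \infty$. The limiting denominator is $\alpha^{k+\mathbf{1}_{\mathcal{T}}(X_0)} \ge \alpha > 0$, so for all sufficiently large $N$ we have $\alpha^{k+\mathbf{1}_{\mathcal{T}}(X_0)} - \varepsilon_N > 0$. The positivity proviso of the theorem therefore holds eventually, and \eqref{eq:relaxed_lower_bound} applies for all large $N$. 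Its right-hand side is
\[
\frac{v(X_0)\rho(X_0)+\Lambda(\beta,k')-\varepsilon_N}{\alpha^{k+\mathbf{1}_{\mathcal{T}}(X_0)}-\varepsilon_N},
\]
which is continuous in $\varepsilon_N$ near $0$. It therefore converges to the right-hand side of \eqref{eq:relaxed_infinite_bound}. Taking limits on both sides of the inequality, and using the monotone limit established above, yields the claim.

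\textbf{Main obstacle.} There is no deep difficulty; the only delicate point is verifying that the finite-horizon bound is applicable for all large $N$. This amounts to checking that the denominator is eventually positive, and that $\Lambda(\beta,k')$ does not depend on $N$, which holds by its definition. One may also note that the bound can exceed $1$ or be negative, but a lower bound remains valid in either case, so nothing further is needed.
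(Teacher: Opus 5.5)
Your proposal is correct and takes essentially the same route as the paper. The paper proves this corollary exactly as Corollary~\ref{cor:infinite_lower_bound}: it uses the nested events $E_N \uparrow \{\widetilde{N}_{\mathcal{T}}(\infty,\omega)\ge k\}$ and continuity of $\mathbb{P}$, then lets $M\alpha^{-(N-2k+2)}\to 0$ in the finite-horizon bound of Theorem~\ref{thm:relaxed_occupation_bound}. You also check explicitly that the denominator is eventually positive and that $\Lambda(\beta,k')$ does not depend on $N$, which the paper leaves implicit.
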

\begin{proof}
 The proof is the same as that of Corollary \ref{cor:infinite_lower_bound}. 
\end{proof}

The weighted condition introduces a fundamental trade-off. By allowing the barrier to decay inside the target (via the $\alpha^2$ weight), we enable the verification of systems with positive additive drift $\beta > 0$. However, this implies the system is ``leaky'' rather than strictly attractive. Thus, the probability bound in Corollary \ref{cor:relaxed_infinite_bound} decays geometrically with $k$ (scaling as $\alpha^{-k}$), reflecting that the likelihood of maintaining the repeated behavior may diminish with each subsequent visit.

\begin{remark}[Computational challenges and the indicator function]
While our theorems utilize barrier conditions that are affine in $v(x)$ for fixed $(\alpha,\beta)$, their computational synthesis presents challenges due to the indicator functions $\mathbf{1}_{\mathcal{T}}$ and $\mathbf{1}_{\mathcal{X}\setminus \mathcal{T}}$. These introduce piecewise constraints that must hold separately on different regions, rendering the synthesis problem non-convex and NP-hard for general continuous-state systems. However, these conditions remain valuable for guiding manual construction or tailoring computational methods in specific system classes. Furthermore, the inherent piecewise structure aligns well with emerging neural barrier function approaches (e.g., ReLU networks \cite{nair2010rectified}), offering principled loss functions for future research.
\end{remark}
\begin{remark}[Computational Tractability and Discrete Disturbances]
\label{remark2}
When the disturbance $d_t$ takes values from a finite discrete set $\{d_1, \dots, d_m\}$ with probabilities $\{p_1, \dots, p_m\}$, the expectation becomes a finite weighted sum. While the resulting expression is still piecewise due to the indicators, it can be rigorously verified by partitioning the safe set $\mathcal{X}$ into sub-regions where the target-hitting behavior of each disturbance mode is constant. For instance, consider $X_{t+1} = 0.5 X_t + d_t$ with $d_t \in \{-1, 1\}$ (uniform probability), $\mathcal{X} = [-4, 4]$, and $\mathcal{T} = [-0.5, 0.5]$. The pre-images of $\mathcal{T}$ for the two modes partition $\mathcal{X}$ into five intervals bounded by $\{-4, -3, -1, 1, 3, 4\}$. Within each interval, the indicator terms are constant. For example, in the region $[-3, -1]$, mode $d=-1$ maps outside $\mathcal{T}$ while mode $d=1$ maps inside. This reduces the verification to a standard polynomial inequality: $0.5 [\alpha v(0.5x - 1)] + 0.5 [1 \cdot v(0.5x + 1)] \le \alpha v(x) + \beta$. By checking such inequalities on each sub-interval (e.g., via SOS programming), one can rigorously certify the barrier without approximation errors.
\end{remark}

%% file: examples.tex
\section{Illustrating Examples}
\label{sec:examples}
In this section, we demonstrate the efficacy of the proposed barrier conditions using two representative scalar systems subject to discrete disturbances $d_t \in \{-0.1, 0.1\}$ with uniform probability. This setting allows for the exact computation of conditional expectations as finite sums. To rigorously handle the indicator function $\mathbf{1}_{\mathcal{T}}$ appearing in the barrier conditions, we partition the safe set $\mathcal{X}$ into sub-intervals based on whether the subsequent state lands within the target set $\mathcal{T}$ as discussed in Remark \ref{remark2}, and verify the required inequalities on each partition. We illustrate two complementary verification scenarios: (1) establishing upper bounds on the probability of visiting a remote target, and (2) establishing lower bounds on the probability of frequently visiting a central target.

\begin{example}[Upper-Bound Verification (Remote Target)]
\label{example1}
Consider a stable system where the target is located far from the equilibrium. Our objective is to certify that the probability of repeatedly visiting this remote region is negligible.

Consider the system $X_{t+1} = 0.5\,X_t + d_t$. The safe set is $\mathcal{X} = [-4, 4]$, the target set is $\mathcal{T} = [2, 3]$, and the initial state is $X_0 = 0$. We present two barrier functions to illustrate the trade-off between horizon specificity and horizon independence.

\noindent\textbf{Barrier 1 (Finite-horizon).} We employ a dissipative barrier $v_1(x) = (x/2)^4$ with a contraction rate of $\alpha_1 = 0.9$. Numerical verification confirms that the multiplicative barrier condition holds with a small additive drift $\beta_1 = 0.0002$.

Applying Theorem \ref{thm:mult_beta_occupation} with $\rho(X_0)=1$ (since $X_0 \notin \mathcal{T}$), we obtain the finite-horizon upper bound $\mathbb{P}(\widetilde{N}_{\mathcal{T}}(N,\omega) \ge k) \le \left(v_1(0) + \frac{\beta_1}{\alpha_1}\sum_{t=1}^{N}\alpha_1^{-t}\right)\alpha_1^k$.
The computed bounds for Barrier 1 are presented in Table \ref{tab:upper_results_finite}.

\begin{table}[h!]
\centering
\caption{Certified upper bounds for Example 1 using Barrier 1 (finite-horizon).}
\begin{tabular}{c c c c c}
\hline
\multirow{2}{*}{\textbf{Horizon} $N$} & \multicolumn{4}{c}{\textbf{Minimum Visits} ($k$)} \\
\cline{2-5}
 & $\mathbf{k=1}$ & $\mathbf{k=3}$ & $\mathbf{k=5}$ & $\mathbf{k=7}$ \\
\hline
10 & 0.0037 & 0.0030 & 0.0025 & 0.0020 \\
20 & 0.0146 & 0.0118 & 0.0096 & 0.0078 \\
30 & 0.0452 & 0.0366 & 0.0297 & 0.0240 \\
\hline
\end{tabular}
\label{tab:upper_results_finite}
\end{table}

\noindent\textbf{Barrier 2 (Infinite-horizon).} To obtain a horizon-independent bound, we must employ a barrier satisfying $\beta_2=0$. This requires the barrier to vanish on the system's robust invariant set $[-0.2, 0.2]$. We construct a dead-zone barrier $v_2(x) = 0.33 \max(0, |x| - 0.25)^2$. Numerical verification confirms this satisfies the barrier condition with $\alpha_2 = 0.3$ and $\beta_2 = 0$. Since the initial state $X_0=0$ lies within the dead-zone, $v_2(0)=0$. Applying Theorem \ref{thm:mult_beta_occupation} for $N=\infty$, we obtain $\mathbb{P}(\widetilde{N}_{\mathcal{T}}(\infty,\omega) \ge k) \le v_2(0)\rho(0)\alpha_2^k = 0$.
This barrier rigorously certifies that the probability of visiting the remote target is \textbf{zero} over an infinite horizon.

\noindent\textbf{Discussion.}
Barrier 1 uses a polynomial barrier with $\beta > 0$, providing valid but conservative bounds that increase with $N$. In contrast, Barrier 2 employs a dead-zone barrier with $\beta = 0$ that vanishes on the invariant set $[-0.2, 0.2]$, yielding a zero asymptotic bound. Both constructions are mathematically correct, but the differing bounds arise from the fundamental differences in their barrier designs and associated conditions. Barrier 1's conservatism grows with the horizon length, which can overshadow the system's actual contracting behavior, whereas Barrier 2 captures this behavior more precisely. This comparison highlights a key trade-off in barrier certificate design: stricter barrier conditions can provide tighter bounds but are harder to satisfy, while more permissive conditions offer valid but potentially conservative guarantees.
\end{example}

\begin{example}[Lower-Bound Verification (Central Target, Starting Outside)]
\label{example2}

Consider the same stable stochastic system, now initialized outside the target set. We show that a carefully designed piecewise-constant attractive barrier yields a non-trivial lower bound on the probability of frequent target visits.

We consider the system $X_{t+1} = 0.5\,X_t + d_t$. The safe set is $\mathcal{X} = [-1,1]$, the target set is $\mathcal{T} = [-0.2,0.2]$, and the initial state is $X_0 = 0.5$.

\noindent\textbf{Part 1: Attractive Barriers in Theorem \ref{thm:attractive_mult_occupation_bound} and Corollary \ref{cor:infinite_lower_bound}}

We use the attractive barrier function:
\[
v(x) =
\begin{cases}
1, & |x| \le 0.2, \\
0.99, & 0.2 < |x| \le 0.6, \\
0, & |x| > 0.6,
\end{cases}
\]
with amplification factor $\alpha = 1.009$ and drift $\beta = 0$.

\noindent \textbf{Verification of the Barrier Condition:}
We verify that, for all $x \in \mathcal{X}$,
\[
\mathbb{E}\!\left[
    \alpha \mathbf{1}_{\mathcal{T}}(X_{t+1}) v(X_{t+1})
    + \mathbf{1}_{\mathcal{X}\setminus\mathcal{T}}(X_{t+1}) v(X_{t+1})
    \,\middle|\, X_t = x
\right]
\;\ge\; \alpha\, v(x).
\]

1). If $x \in \mathcal{T}$, the target is invariant under the dynamics, and the condition holds with equality.

2). If $0.2 < |x| \le 0.6$, then $v(x)=0.99$. In the worst case, one disturbance realisation drives the state into $\mathcal{T}$ (yielding factor $\alpha$) while the other keeps it in $\mathcal{X}\setminus\mathcal{T}$ (yielding factor $1$). The condition reduces to $\tfrac{1}{2}\bigl(0.99 + \alpha\bigr)\ge \alpha \cdot 0.99$, which is satisfied for $\alpha = 1.009$ because $0.9995 \ge 0.9989$.
    
3). If $|x| > 0.6$, then $v(x)=0$, and the left-hand side is non-negative, so the inequality holds trivially.

The sink condition holds with global bound $M = \sup_{x \in \widetilde{\mathcal{X}}} |v(x)| = 1$. Applying Theorem \ref{thm:attractive_mult_occupation_bound} with $\rho(X_0)=1$ (since $X_0 \notin \mathcal{T}$), $k' = k$, and $\beta = 0$, we obtain the lower bound
\[
p \;\ge\;
\frac{v(0.5) - \alpha^{-(N-k+1)}}{1 - \alpha^{-(N-k+1)}}
\;=\;
\frac{0.99 - 1.009^{-(N-k+1)}}{1 - 1.009^{-(N-k+1)}}.
\]

Numerical results for varying $N$ and $k$ are shown in Table \ref{tab:comparison_bounds}. As $N \to \infty$, the bounds converge to $0.99$, certifying repeated interaction with the target.
\vspace{-0.0cm}
\begin{table}[ht]
\centering
\caption{Comparison of certified lower bounds obtained in Part 1 and Part 2.}
\label{tab:comparison_bounds}
\setlength{\tabcolsep}{3.5pt}
\begin{tabular}{c|cc|cc|cc}
\hline
\multirow{2}{*}{\textbf{Horizon} $N$} & \multicolumn{2}{c|}{\textbf{$k=5$}} & \multicolumn{2}{c|}{\textbf{$k=10$}} & \multicolumn{2}{c}{\textbf{$k=15$}} \\
\cline{2-7}
 & \textbf{Part 1} & \textbf{Part 2} & \textbf{Part 1} & \textbf{Part 2} & \textbf{Part 1} & \textbf{Part 2} \\
\hline
20       & 0.93 & 0.636 & 0.89 & 0.069 & 0.81 & 0.000 \\
50       & 0.97 & 0.846 & 0.97 & 0.696 & 0.96 & 0.525 \\
100      & 0.98 & 0.909 & 0.98 & 0.831 & 0.98 & 0.752 \\
200      & 0.99 & 0.936 & 0.99 & 0.885 & 0.99 & 0.834 \\
500      & 0.99 & 0.946 & 0.99 & 0.904 & 0.99 & 0.864 \\
$\infty$ & \textbf{0.99} & \textbf{0.947} & \textbf{0.99} & \textbf{0.905} & \textbf{0.99} & \textbf{0.866} \\
\hline
\end{tabular}
\end{table}
\vspace{-0.0cm}

\noindent\textbf{Part 2: Attractive Barriers in Theorem \ref{thm:relaxed_occupation_bound} and Corollary \ref{cor:relaxed_infinite_bound}}

We now apply the attractive barrier condition from Theorem \ref{thm:relaxed_occupation_bound} to the system. We utilize the same barrier function $v(x)$ and parameters ($\alpha = 1.009$, $\beta = 0$).

\noindent \textbf{Verification of the Barrier Condition:}
Theorem \ref{thm:relaxed_occupation_bound} requires a stronger weight inside the target ($\alpha^2$) to accommodate dynamics. We verify condition \eqref{eq:relaxed_drift}:
\[
\mathbb{E}\!\left[
    \alpha^2 \mathbf{1}_{\mathcal{T}}(X_{t+1}) v(X_{t+1})
    + \mathbf{1}_{\mathcal{X}\setminus\mathcal{T}}(X_{t+1}) v(X_{t+1})
    \,\middle|\, X_t = x
\right]
\;\ge\; \alpha\, v(x) + \beta.
\]
For $x \in \mathcal{T}$, the next state $X_{t+1}$ remains in $\mathcal{T}$. The LHS becomes $\alpha^2 \cdot v(X_{t+1}) = \alpha^2$, while the RHS is $\alpha \cdot v(x) = \alpha$. Since $\alpha > 1$, $\alpha^2 > \alpha$, satisfying the condition. For $x \notin \mathcal{T}$, the verification mirrors Part 1. Numerical results for varying $N$ and $k$ are shown in Table \ref{tab:comparison_bounds}.

\noindent \textbf{Discussion.}
The two barrier conditions yield qualitatively different guarantees. The first attractive barrier (Theorem \ref{thm:attractive_mult_occupation_bound}) provides a constant asymptotic bound of 0.99 for all $k$, certifying repeated interaction with the target. However, this requires the stricter condition of geometric amplification within the target set. In contrast, the  attractive barrier (Theorem \ref{thm:relaxed_occupation_bound}) allows for $\alpha^2$ amplification within the target, making the condition easier to satisfy but resulting in bounds that decay geometrically with $k$. This reflects a fundamental trade-off: while Theorem \ref{thm:attractive_mult_occupation_bound} provides a stronger guarantee (constant lower bound) but requires a stricter condition (geometric amplification), Theorem \ref{thm:relaxed_occupation_bound} relaxes the condition (allowing $\alpha^2$ amplification in the target) at the cost of a weaker, decaying bound. This flexibility allows Theorem \ref{thm:relaxed_occupation_bound} to verify systems where the strict contraction of Theorem \ref{thm:attractive_mult_occupation_bound} is overly conservative or infeasible. These complementary results demonstrate the framework's versatility in delivering tailored probabilistic guarantees for constrained occupation times, depending on the system characteristics and verification objectives. A comprehensive comparison of these two attractive barrier conditions will be investigated in the future work.

\end{example}

\noindent\textbf{Monte Carlo Validation.} To assess the validity and conservatism of the derived bounds, we performed Monte Carlo simulations with $10^5$ sample trajectories for both examples; representative trajectories are shown in Fig.~\ref{fig:simulation_examples}. For Example~\ref{example1}, the empirical probability of satisfying the visit-count condition was consistently $0.0$, in agreement with the certified upper bounds (e.g., $\le 2.4\%$ for $N=30$, $k=7$). This confirms that the barrier correctly identifies the target as effectively unreachable. For Example~\ref{example2}, the system exhibits strong performance: the empirical success probability reaches $1.0$ for all cases with $N \ge 50$, and remains at least $0.985$ when $N=20$. In all cases, the empirical results strictly exceed the certified lower bounds in Table~\ref{tab:comparison_bounds}. These results validate the proposed framework, demonstrating that the derived bounds provide rigorous guarantees on system behavior that remain valid beyond simulation-based evidence.

\begin{figure}
  \centering
    \includegraphics[width=0.8\textwidth,height=3.5cm]{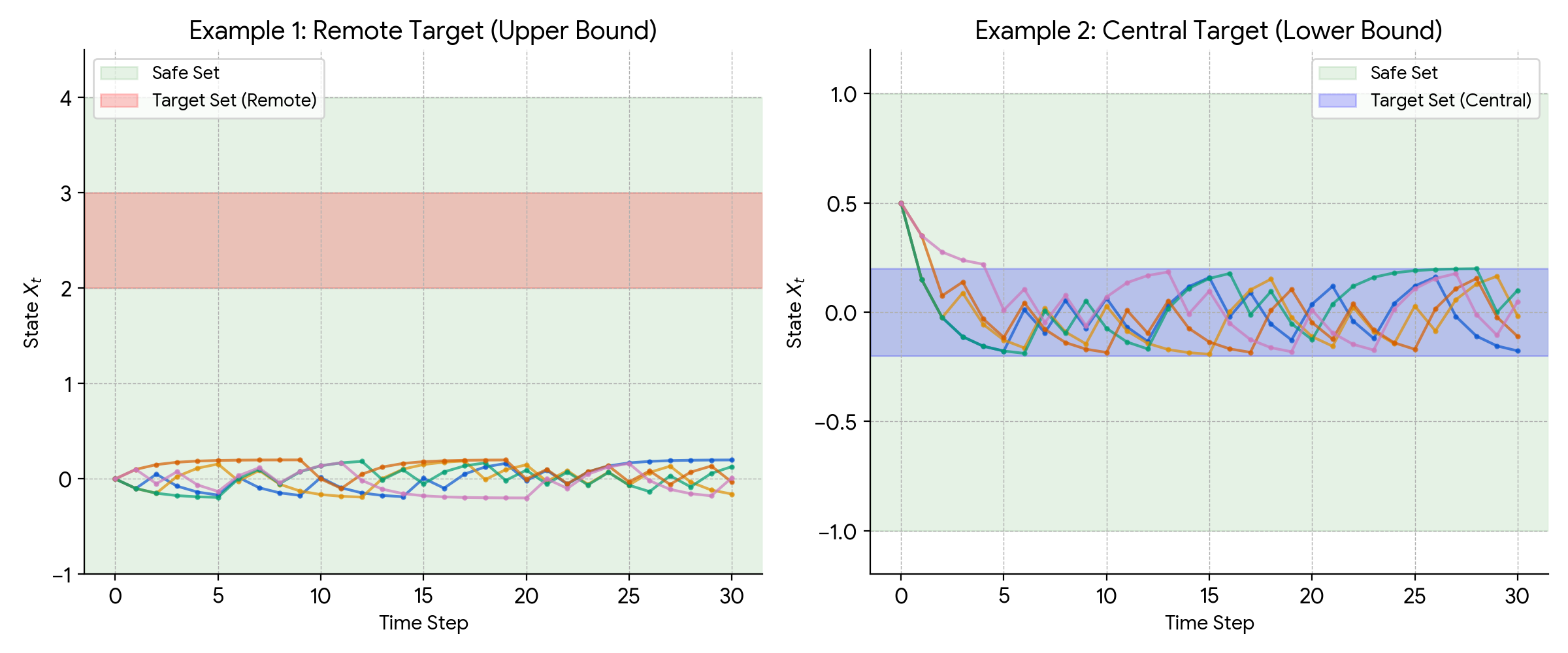}
        \captionsetup{justification=centering}
    \caption{Numerical illustration for constrained occupation time.} 
        \centering
    \label{fig:simulation_examples}
\end{figure}

%% file: conclusion.tex
\section{Conclusion}
\label{sec:con}
In this paper, we have presented a unified framework for the quantitative verification of constrained occupation time in stochastic discrete-time systems. By adopting a switched-system formulation and introducing dissipative and attractive stochastic barriers, we have extended classical reach-avoid analysis to capture the cumulative behaviors essential for repeated autonomy. While the resulting barrier conditions involve piecewise structures that challenge traditional convex synthesis, they establish the fundamental `Lyapunov-like' boundaries required for certification. Just as Foster-Lyapunov theory laid the groundwork for stability analysis long before efficient solvers existed, these conditions provide the rigorous theoretical target for future algorithmic development.

Notably, the piecewise-affine nature of our conditions makes them particularly well-suited for emerging Neural Barrier Function methods. Because ReLU networks naturally capture the required indicator geometries, our theoretical results directly suggest the appropriate loss functions and structural constraints for training neural certificates. This motivates a promising avenue for future investigation. In addition, data-driven computational methods equipped with PAC-style characterization constitute another important research direction. Further, future work will extend the proposed framework to continuous-time stochastic systems \cite{prajna2007framework,xue2024} and explore its application to the verification of general $\omega$-regular specifications \cite{henzinger2025supermartingale,abate2025quantitative}, by decomposing them into sequences of constrained occupation tasks.

%% file: appendix.tex
\section*{Appendix}
\setcounter{corollary}{0}
\begin{corollary}[Infinite-Horizon Lower Bounds]
Consider the setup of Theorem \ref{thm:attractive_mult_occupation_bound} with $\alpha > 1$. To obtain a non-trivial bound as $N \to \infty$, we must select $\beta = 0$. The probability of visiting the target set $\mathcal{T}$ at least $k$ times over the infinite horizon satisfies:
\begin{equation}
\label{eq:infinite_lower_bound}
\mathbb{P}\big( \widetilde{N}_{\mathcal{T}}(\infty,\omega) \ge k \big)
\;\ge\;
v(X_0).
\end{equation}
\end{corollary}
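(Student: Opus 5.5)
The plan is to derive the corollary as a limit of Theorem~\ref{thm:attractive_mult_occupation_bound} with $\beta=0$, using monotonicity of the occupation events in the horizon. Fix $X_0\in\mathcal{X}$ and $k\ge 1$, and take $\beta = 0$, so that $\Lambda(\beta,N,k') = \beta N = 0$ for every $N$. With $\beta=0$ the sink condition reads $v\le 0$ on $\widetilde{\mathcal{X}}\setminus\mathcal{X}$. All hypotheses of Theorem~\ref{thm:attractive_mult_occupation_bound} then hold for every finite $N\ge \max\{1,k-\mathbf{1}_{\mathcal{T}}(X_0)\}$.

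\textbf{Step 1 (finite-horizon bound for large $N$).} Put $\varepsilon_N := M\alpha^{-(N-k+1)}$. Since $\alpha>1$ and $M<\infty$, we have $\varepsilon_N\to 0$ as $N\to\infty$. Because $\rho(X_0)\in\{1,\alpha\}$ satisfies $\rho(X_0)\ge 1>0$, the proviso $\rho(X_0)>\varepsilon_N$ holds for all sufficiently large $N$. For such $N$, Theorem~\ref{thm:attractive_mult_occupation_bound} gives
\[
\mathbb{P}\big(\widetilde{N}_{\mathcal{T}}(N,\omega)\ge k\big)\;\ge\;\frac{v(X_0)\rho(X_0)-\varepsilon_N}{\rho(X_0)-\varepsilon_N}.
\]

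\textbf{Step 2 (monotonicity).} The count $\widetilde{N}_{\mathcal{T}}(N,\omega)$ is nondecreasing in $N$ pathwise, and $\widetilde{N}_{\mathcal{T}}(\infty,\omega)=\lim_{N}\widetilde{N}_{\mathcal{T}}(N,\omega)$. Hence the events $\{\widetilde{N}_{\mathcal{T}}(N,\omega)\ge k\}$ increase with $N$ to $\{\widetilde{N}_{\mathcal{T}}(\infty,\omega)\ge k\}$. In particular, $\mathbb{P}(\widetilde{N}_{\mathcal{T}}(\infty,\omega)\ge k)\ge \mathbb{P}(\widetilde{N}_{\mathcal{T}}(N,\omega)\ge k)$ for every $N$; by continuity from below the two sides in fact become equal in the limit, though only the inequality is needed.

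\textbf{Step 3 (pass to the limit).} Combining Steps 1 and 2 for all large $N$ gives $\mathbb{P}(\widetilde{N}_{\mathcal{T}}(\infty,\omega)\ge k)\ge \frac{v(X_0)\rho(X_0)-\varepsilon_N}{\rho(X_0)-\varepsilon_N}$. The right-hand side is a continuous function of $\varepsilon_N$ near $0$, since the denominator stays near $\rho(X_0)>0$. Letting $N\to\infty$ it converges to $v(X_0)\rho(X_0)/\rho(X_0)=v(X_0)$, and the left-hand side does not depend on $N$, which yields \eqref{eq:infinite_lower_bound}.

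For the remark that $\beta=0$ is necessary: with $\beta<0$ the numerator contains $\beta N\to-\infty$ while the denominator tends to $\rho(X_0)$, so the bound becomes vacuous. There is no real obstacle here. The only points needing care are that the positivity proviso of the theorem eventually holds, which follows from $\rho(X_0)\ge 1$, and the monotone-limit justification in Step 2.
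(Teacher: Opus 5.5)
Your proposal is correct and follows the paper's own argument. Both fix $\beta=0$ so that the $\beta N$ term disappears, apply Theorem~\ref{thm:attractive_mult_occupation_bound} at each finite horizon, use the nested events $\{\widetilde{N}_{\mathcal{T}}(N,\omega)\ge k\}$ to pass to $N=\infty$, and let $M\alpha^{-(N-k+1)}\to 0$ to obtain $v(X_0)$. Your write-up is slightly cleaner than the paper's: fixing $\beta=0$ at the start avoids its mixed $j$/$N$ double limit, and you explicitly check that the proviso $\rho(X_0)>M\alpha^{-(N-k+1)}$ eventually holds, which the paper does not.
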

\begin{proof}
Let $E_j$ be the event $\{ \widetilde{N}_{\mathcal{T}}(j,\omega) \ge k \}$. Since the occupation count is cumulative, the sequence of events is nested: $E_j \subseteq E_{j+1}$. By the continuity of probability measures, we have $\mathbb{P}\big( \widetilde{N}_{\mathcal{T}}(\infty,\omega) \ge k \big) = \mathbb{P}\left( \bigcup_{j=1}^\infty E_j \right) = \lim_{j \to \infty} \mathbb{P}(E_j)$. From Theorem \ref{thm:attractive_mult_occupation_bound}, for any finite $j \ge k'$, we established the lower bound
\[
\mathbb{P}(E_j) \ge B(j) := \frac{ v(X_0)\rho(X_0) + \beta N - M \alpha^{-(j - k + 1)} }{ \rho(X_0) - M \alpha^{-(j - k + 1)} }.
\]
We now analyze the limit of $B(j)$ as $j \to \infty$.
Since $\alpha > 1$, the term $\alpha^{-(j - k + 1)}$ converges to $0$.
Provided the global bound $M$ is finite (as assumed in Theorem \ref{thm:attractive_mult_occupation_bound}), the term $M \alpha^{-(j - k + 1)}$ vanishes.
Thus, $\lim_{j \to \infty} B(N) = \frac{ v(X_0)\rho(X_0) + \beta N}{ \rho(X_0) }$.

We examine the limit as $N \to \infty$. If $\beta < 0$, the penalty term $\beta N$ diverges to $-\infty$, rendering the bound vacuous. Therefore, we require $\beta = 0$. In this case, the penalty term is 0, and the transient term $M \alpha^{-(N-k+1)} \to 0$ (since $\alpha > 1$). The bound simplifies directly to $v(X_0)$. \qed
\end{proof}

\setcounter{theorem}{2}

\begin{theorem}[Occupation-Time Lower Bounds via Weighted Attractive Barriers]
Suppose there exist a measurable function $v: \widetilde{\mathcal{X}} \to \mathbb{R}$ and constants $\alpha > 1$ and $\beta \ge 0$ such that $v$ is bounded on $\widetilde{\mathcal{X}}$, with $M := \sup_{X \in \widetilde{\mathcal{X}}} |v(X)| < \infty$. Assume the following conditions hold:

\begin{enumerate}
    \item \textbf{Weighted Attractive Barrier Condition:} For all $X_t \in \mathcal{X}$:
    \begin{equation}
    \label{eq:relaxed_drift1}
    \mathbb{E}\left[ \alpha^2 \mathbf{1}_{\mathcal{T}}(X_{t+1}) v(X_{t+1}) + \mathbf{1}_{\widetilde{\mathcal{X}} \setminus \mathcal{T}}(X_{t+1}) v(X_{t+1}) \;\middle|\; X_t \right] \ge \alpha v(X_t) + \beta.
    \end{equation}
    \item \textbf{Bound on the target:} \qquad $v(x) \le 1, \qquad \forall x \in \mathcal{T}$.
    \item \textbf{Sink Condition:} \qquad $v(x) \le -\frac{\beta}{\alpha - 1}, \qquad \forall x \in \widetilde{\mathcal{X}} \setminus \mathcal{X}$.
\end{enumerate}

Let the weighting function be $\rho(X) = \alpha^2$ if $X \in \mathcal{T}$ and $\rho(X) = 1$ otherwise.
For any initial condition $X_0 \in \mathcal{X}$, horizon $N \ge 1$, and target threshold $k \ge 1$, let $k' = k - \mathbf{1}_{\mathcal{T}}(X_0)$. Assume $k' \le N < \infty$. Define the accumulated drift term as $\Lambda(\beta, k') := \frac{\beta}{\alpha - 1}(1 - \alpha^{-k'})$. Then the probability that the constrained occupation time $\widetilde{N}_{\mathcal{T}}(N,\omega) \ge k$ satisfies:
\begin{equation*}
\label{eq:relaxed_lower_bound1}
\mathbb{P}\left( \widetilde{N}_{\mathcal{T}}(N,\omega) \ge k \right)
\;\ge\;
\frac{
v(X_0)\rho(X_0) + \Lambda(\beta, k') - M \alpha^{-(N - 2k + 2)}
}{
\alpha^{k + \mathbf{1}_{\mathcal{T}}(X_0)} - M \alpha^{-(N - 2k + 2)}
},
\end{equation*}
provided the denominator is positive.
\end{theorem}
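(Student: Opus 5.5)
The proof follows the template of Theorem~\ref{thm:attractive_mult_occupation_bound}. The changes are that $\rho$ now takes the value $\alpha^2$ on $\mathcal{T}$, and that the additive drift $\beta\ge 0$ is now a benefit rather than a penalty.

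\textbf{Step 0: extend the drift condition to the whole switched domain.} For $\widetilde{X}_t \in \widetilde{\mathcal{X}}\setminus\mathcal{X}$ the switched dynamics freeze the state, and $\mathcal{T}\subseteq\mathcal{X}$, so $\rho(\widetilde{X}_{t+1})v(\widetilde{X}_{t+1}) = v(\widetilde{X}_t)$. Hence \eqref{eq:relaxed_drift1} there reduces to $v \ge \alpha v + \beta$, i.e.\ $v \le -\beta/(\alpha-1)$, which is exactly the sink condition. Therefore
\[
\mathbb{E}[\rho(\widetilde{X}_{t+1})v(\widetilde{X}_{t+1})\mid \widetilde{X}_t]\ge \alpha v(\widetilde{X}_t)+\beta \quad \text{for all } \widetilde{X}_t\in\widetilde{\mathcal{X}}.
\]

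\textbf{Step 1: scaled process.} Define
\[
Z_t := \alpha^{-t}\Big(\prod_{s=0}^t\rho(\widetilde{X}_s)\Big)v(\widetilde{X}_t) = \alpha^{-(t-2\widetilde{N}_{\mathcal{T}}(t,\omega))}\,v(\widetilde{X}_t).
\]
Exactly as in Step~1 of Theorem~\ref{thm:attractive_mult_occupation_bound}, one obtains
\[
\mathbb{E}[Z_{t+1}\mid\mathcal{F}_t]\ge Z_t + \beta\,\alpha^{-(t+1)}\alpha^{2\widetilde{N}_{\mathcal{T}}(t,\omega)}.
\]
Since $\beta\ge0$ and $\widetilde{N}_{\mathcal{T}}\ge 0$, the drift is at least $\Delta_t:=\beta\alpha^{-(t+1)}\ge 0$. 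Consequently $Y_t := Z_t-\sum_{j<t}\Delta_j$ is a submartingale.

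\textbf{Step 2: optional stopping.} Take $\tau_k$ as before and $\tau=\tau_k\wedge N$. Proposition~\ref{thm:OST} (bounded case) gives
\[
\mathbb{E}[Z_\tau]\ge \rho(X_0)v(X_0)+\mathbb{E}\Big[\sum_{j=0}^{\tau-1}\beta\alpha^{-(j+1)}\Big].
\]
The key observation is that $\tau\ge k'$ on every path. On success, reaching $k$ visits requires $k'=k-\mathbf{1}_{\mathcal{T}}(X_0)$ visits at times $\ge1$, so $\tau_k\ge k'$. On failure, $\tau=N\ge k'$ by assumption. Since the summands are nonnegative, the sum is at least $\beta\sum_{j=1}^{k'}\alpha^{-j}=\frac{\beta}{\alpha-1}(1-\alpha^{-k'})=\Lambda(\beta,k')$.

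\textbf{Step 3: success/failure upper bound.}
\begin{itemize}
\item On $\{\tau_k\le N\}$ we have $\widetilde{X}_{\tau_k}\in\mathcal{T}$, so $v\le1$, and $Z_{\tau_k}=\alpha^{2k-\tau_k}v(\widetilde{X}_{\tau_k})$. If $v\ge0$ this is at most $\alpha^{2k-k'}$, using $\tau_k\ge k'$. If $v<0$ it is negative and the same bound holds trivially. Since $2k-k'=k+\mathbf{1}_{\mathcal{T}}(X_0)$, we get $Z_{\tau_k}\le\alpha^{k+\mathbf{1}_{\mathcal{T}}(X_0)}$.
\item On the complement, $\widetilde{N}_{\mathcal{T}}(N,\omega)\le k-1$ and $|v|\le M$, so $Z_N\le M\alpha^{-(N-2k+2)}$.
\end{itemize}
With $p=\mathbb{P}(\widetilde{N}_{\mathcal{T}}(N,\omega)\ge k)$ this gives
\[
\mathbb{E}[Z_\tau]\le p\,\alpha^{k+\mathbf{1}_{\mathcal{T}}(X_0)}+(1-p)M\alpha^{-(N-2k+2)}.
\]

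\textbf{Step 4: solve for $p$.} Combine with Step 2 and rearrange. Dividing by the (assumed positive) denominator $\alpha^{k+\mathbf{1}_{\mathcal{T}}(X_0)}-M\alpha^{-(N-2k+2)}$ yields the stated bound.

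\textbf{Main obstacle.} The delicate point is the bookkeeping in Steps 2--3. The same fact $\tau\ge k'$ must do double duty: it supplies the lower bound $\Lambda(\beta,k')$ on the accumulated drift, and it supplies the upper bound $\alpha^{k+\mathbf{1}_{\mathcal{T}}(X_0)}$ on $Z_{\tau_k}$. In the latter, the sign of $v$ at the hitting time has to be handled separately. It also matters that the sink condition is what makes the extended drift inequality valid on the absorbing set.
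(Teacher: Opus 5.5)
Your proposal is correct and follows the paper's proof essentially step for step: the same scaled process $Z_t=\alpha^{-(t-2\widetilde{N}_{\mathcal{T}}(t,\omega))}v(\widetilde{X}_t)$, the same drift lower bound $\Delta_t=\beta\alpha^{-(t+1)}$, optional stopping at $\tau_k\wedge N$ with $\tau\ge k'$ giving $\Lambda(\beta,k')$, and the same success/failure decomposition. Two points are slightly more careful than the paper's write-up without changing the argument: your explicit check that the sink condition is exactly the drift inequality on the absorbing set, and your separate treatment of the sign of $v(\widetilde{X}_{\tau_k})$.
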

\begin{proof}
The proof follows the martingale construction strategy but adapts the scaling factors to match the weighting factor $\alpha^2$.

\noindent\textbf{Preliminaries:}
The weighted barrier condition \eqref{eq:relaxed_drift1} can be equivalently stated as $\mathbb{E}[\rho(\widetilde{X}_{t+1}) v(\widetilde{X}_{t+1}) \mid \widetilde{X}_t] \ge \alpha v(\widetilde{X}_t) + \beta$ for all $\widetilde{X}_t \in \mathcal{X}$. For the sink states $\widetilde{X} \in \widetilde{\mathcal{X}} \setminus \mathcal{X}$, the Sink Condition (3) ensures the inequality holds globally.

\noindent\textbf{Step 1: Submartingale Construction.}
We define a scaled stochastic process $\{Z_t\}_{t \ge 0}$ adapted to $\mathcal{F}_t$:
\[
Z_t := \alpha^{-t} \left( \prod_{s=0}^{t} \rho(\widetilde{X}_s) \right) v(\widetilde{X}_t).
\]
Since the product term accumulates a factor of $\alpha^2$ upon every visit to $\mathcal{T}$, we can express $Z_t$ as:
\[
Z_t = \alpha^{-t} \alpha^{2\widetilde{N}_{\mathcal{T}}(t,\omega)} v(\widetilde{X}_t) = \alpha^{-(t - 2\widetilde{N}_{\mathcal{T}}(t,\omega))} v(\widetilde{X}_t).
\]
We verify the submartingale property by computing the conditional expectation:
\[
\begin{aligned}
\mathbb{E}[Z_{t+1} \mid \mathcal{F}_t] &= \alpha^{-(t+1)} \alpha^{2\widetilde{N}_{\mathcal{T}}(t,\omega)} \mathbb{E}[\rho(\widetilde{X}_{t+1}) v(\widetilde{X}_{t+1}) \mid \widetilde{X}_t] \\
&\ge \alpha^{-(t+1)} \alpha^{2\widetilde{N}_{\mathcal{T}}(t,\omega)} (\alpha v(\widetilde{X}_t) + \beta) \\
&= Z_t + \beta \alpha^{-(t+1)} \alpha^{2\widetilde{N}_{\mathcal{T}}(t,\omega)}.
\end{aligned}
\]
Since $\beta \ge 0$ and $\widetilde{N}_{\mathcal{T}} \ge 0$, the drift term $D_t \ge \beta \alpha^{-(t+1)}$. Let $\Delta_t = \beta \alpha^{-(t+1)}$. The process $Y_t = Z_t - \sum_{j=0}^{t-1} \Delta_j$ is a submartingale.

\noindent\textbf{Step 2: Optional Stopping Theorem.}
Let $\tau = \inf \{t \ge 0 : \widetilde{N}_{\mathcal{T}}(t,\omega) = k\} \wedge N$. By the Optional Stopping Theorem, $\mathbb{E}[Z_\tau] \ge Z_0 + \mathbb{E}[\sum_{j=0}^{\tau-1} \Delta_j]$. Using the fact that $\tau \ge k'$ (minimum steps to record $k'$ visits), we bound the sum by $\Lambda(\beta, k')$, yielding $\mathbb{E}[Z_\tau] \ge v(X_0)\rho(X_0) + \Lambda(\beta, k')$.

\noindent\textbf{Step 3: Success--Failure Decomposition.}
We decompose $\mathbb{E}[Z_\tau]$ based on the event of success $S = \{\widetilde{N}_{\mathcal{T}}(N) \ge k\}$:
\begin{itemize}
    \item \textbf{Success ($S$):} At $\tau$, the visit count is exactly $k$. Thus $Z_\tau = \alpha^{2k - \tau} v(\widetilde{X}_\tau)$. Since $\tau \ge k'$ and $v \le 1$, we have $Z_\tau \le \alpha^{2k - k'} = \alpha^{k + \mathbf{1}_{\mathcal{T}}(X_0)}$.
    \item \textbf{Failure ($S^c$):} At $\tau=N$, the visit count is $\le k-1$. Thus $Z_N \le M \alpha^{-(N - 2k + 2)}$.
\end{itemize}
Combining these bounds, we obtain
\[
p \cdot \alpha^{k + \mathbf{1}_{\mathcal{T}}(X_0)} + (1-p) M \alpha^{-(N - 2k + 2)} \ge v(X_0)\rho(X_0) + \Lambda(\beta, k').
\]
Solving for $p$ yields the result. \qed
\end{proof}

%% file: ref.bib
@article{wang2025quantitative,
  title={Quantitative Verification of Omega-regular Properties in Probabilistic Programming},
  author={Wang, Peixin and Bai, Jianhao and Zhang, Min and Ong, C-H Luke},
  journal={arXiv preprint arXiv:2512.21596},
  year={2025}
}

@article{abate2008probabilistic,
  title={Probabilistic reachability and safety for controlled discrete time stochastic hybrid systems},
  author={Abate, Alessandro and Prandini, Maria and Lygeros, John and Sastry, Shankar},
  journal={Automatica},
  volume={44},
  number={11},
  pages={2724--2734},
  year={2008},
  publisher={Elsevier}
}

@article{cao2025comparative,
  title={Comparative Analysis of Barrier-like Function Methods for Reach-Avoid Verification in Stochastic Discrete-Time Systems},
  author={Cao, Zhipeng and Wang, Peixin and Ong, Luke and {\v{Z}}ikeli{\'c}, {\DJ}or{\dj}e and Wagner, Dominik and Xue, Bai},
  journal={arXiv preprint arXiv:2512.05348},
  year={2025}
}

@article{xue2024finite,
  title={Finite-time safety and reach-avoid verification of stochastic discrete-time systems},
  author={Xue, Bai},
  journal={Information and Computation},
  volume = {307},
pages = {105368},
issn = {0890-5401},
  year={2025}
}

@article{xue2024sufficient,
  title={Sufficient and necessary barrier-like conditions for safety and reach-avoid verification of stochastic discrete-time systems},
  author={Xue, Bai},
  journal={Automatica},
  volume={187},
  pages={112919},
  year={2026},
  publisher={Elsevier}
}

@inproceedings{abate2025quantitative,
  title={Quantitative supermartingale certificates},
  author={Abate, Alessandro and Giacobbe, Mirco and Roy, Diptarko},
  booktitle={International Conference on Computer Aided Verification},
  pages={3--28},
  year={2025},
  organization={Springer}
}

@inproceedings{chatterjee2017stochastic,
  title={Stochastic invariants for probabilistic termination},
  author={Chatterjee, Krishnendu and Novotn{\`y}, Petr and {\v{Z}}ikeli{\'c}, {\DH}or{\dj}e},
  booktitle={Proceedings of the 44th ACM SIGPLAN Symposium on Principles of Programming Languages},
  pages={145--160},
  year={2017}
}

@article{summers2010verification,
  title={Verification of discrete time stochastic hybrid systems: A stochastic reach-avoid decision problem},
  author={Summers, Sean and Lygeros, John},
  journal={Automatica},
  volume={46},
  number={12},
  pages={1951--1961},
  year={2010},
  publisher={Elsevier}
}

@inproceedings{chakarov2013probabilistic,
  title={Probabilistic program analysis with martingales},
  author={Chakarov, Aleksandar and Sankaranarayanan, Sriram},
  booktitle={Computer Aided Verification: 25th International Conference, CAV 2013, Saint Petersburg, Russia, July 13-19, 2013. Proceedings 25},
  pages={511--526},
  year={2013},
  organization={Springer}
}

@inproceedings{jagtap2018temporal,
  title={Temporal logic verification of stochastic systems using barrier certificates},
  author={Jagtap, Pushpak and Soudjani, Sadegh and Zamani, Majid},
  booktitle={International Symposium on Automated Technology for Verification and Analysis},
  pages={177--193},
  year={2018},
  organization={Springer}
}

@inproceedings{yu2023safe,
  title={Safe probabilistic invariance verification for stochastic discrete-time dynamical systems},
  author={Yu, Yiqing and Wu, Taoran and Xia, Bican and Wang, Ji and Xue, Bai},
  booktitle={2023 62nd IEEE Conference on Decision and Control (CDC)},
  pages={5804--5811},
  year={2023},
  organization={IEEE}
}

@inproceedings{henzinger2025supermartingale,
  title={Supermartingale certificates for quantitative omega-regular verification and control},
  author={Henzinger, Thomas A and Mallik, Kaushik and Sadeghi, Pouya and {\v{Z}}ikeli{\'c}, {\DJ}or{\dj}e},
  booktitle={International Conference on Computer Aided Verification},
  pages={29--55},
  year={2025},
  organization={Springer}
}

@inproceedings{zhi2024unifying,
  title={Unifying qualitative and quantitative safety verification of DNN-controlled systems},
  author={Zhi, Dapeng and Wang, Peixin and Liu, Si and Ong, C-H Luke and Zhang, Min},
  booktitle={International Conference on Computer Aided Verification},
  pages={401--426},
  year={2024},
  organization={Springer}
}

@article{vzikelic2023compositional,
  title={Compositional policy learning in stochastic control systems with formal guarantees},
  author={{\v{Z}}ikeli{\'c}, {\DJ}or{\dj}e and Lechner, Mathias and Verma, Abhinav and Chatterjee, Krishnendu and Henzinger, Thomas},
  journal={Advances in Neural Information Processing Systems},
  volume={36},
  pages={47849--47873},
  year={2023}
}

@inproceedings{xue2021reach,
  title={Reach-avoid Analysis for Stochastic Discrete-time Systems},
  author={Xue, Bai and Li, Renjue and Zhan, Naijun and Fr{\"a}nzle, Martin},
  booktitle={2021 American Control Conference (ACC)},
  pages={4879--4885},
  year={2021},
  organization={IEEE}
}

@article{prajna2007framework,
  title={A framework for worst-case and stochastic safety verification using barrier certificates},
  author={Prajna, Stephen and Jadbabaie, Ali and Pappas, George J},
  journal={IEEE Transactions on Automatic Control},
  volume={52},
  number={8},
  pages={1415--1428},
  year={2007},
  publisher={IEEE}
}

@article{takisaka2021ranking,
  title={Ranking and repulsing supermartingales for reachability in randomized programs},
  author={Takisaka, Toru and Oyabu, Yuichiro and Urabe, Natsuki and Hasuo, Ichiro},
  journal={ACM Transactions on Programming Languages and Systems (TOPLAS)},
  volume={43},
  number={2},
  pages={1--46},
  year={2021},
  publisher={ACM New York, NY, USA}
}

@inproceedings{kenyon2021supermartingales,
  title={Supermartingales, ranking functions and probabilistic lambda calculus},
  author={Kenyon-Roberts, Andrew and Ong, C-H Luke},
  booktitle={2021 36th Annual ACM/IEEE Symposium on Logic in Computer Science (LICS)},
  pages={1--13},
  year={2021},
  organization={IEEE}
}

@book{williams1991probability,
  title={Probability with martingales},
  author={Williams, David},
  year={1991},
  publisher={Cambridge university press}
}

@book{grimmett2020probability,
  title={Probability and random processes},
  author={Grimmett, Geoffrey and Stirzaker, David},
  year={2020},
  publisher={Oxford university press}
}

@inproceedings{abate2024stochastic,
  title={Stochastic omega-regular verification and control with supermartingales},
  author={Abate, Alessandro and Giacobbe, Mirco and Roy, Diptarko},
  booktitle={International Conference on Computer Aided Verification},
  pages={395--419},
  year={2024},
  organization={Springer}
}

@incollection{gordon2014probabilistic,
  title={Probabilistic programming},
  author={Gordon, Andrew D and Henzinger, Thomas A and Nori, Aditya V and Rajamani, Sriram K},
  booktitle={Future of software engineering proceedings},
  pages={167--181},
  year={2014}
}

@article{godreche2001statistics,
  title={Statistics of the occupation time of renewal processes},
  author={Godreche, C and Luck, JM1853425},
  journal={Journal of Statistical Physics},
  volume={104},
  number={3},
  pages={489--524},
  year={2001},
  publisher={Springer}
}

@article{darling1957occupation,
  title={On occupation times for Markoff processes},
  author={Darling, Donald A and Kac, Mark},
  journal={Transactions of the American Mathematical Society},
  volume={84},
  number={2},
  pages={444--458},
  year={1957},
  publisher={JSTOR}
}

@inproceedings{nair2010rectified,
  title={Rectified linear units improve restricted boltzmann machines},
  author={Nair, Vinod and Hinton, Geoffrey E},
  booktitle={Proceedings of the 27th international conference on machine learning (ICML-10)},
  pages={807--814},
  year={2010}
}

@inproceedings{pnueli1977temporal,
  title={The temporal logic of programs},
  author={Pnueli, Amir},
  booktitle={18th annual symposium on foundations of computer science (sfcs 1977)},
  pages={46--57},
  year={1977},
  organization={ieee}
}

@book{soderstrom2012discrete,
  title={Discrete-time stochastic systems: estimation and control},
  author={S{\"o}derstr{\"o}m, Torsten},
  year={2012},
  publisher={Springer Science \& Business Media}
}

@inproceedings{clarke2011statistical,
  title={Statistical model checking for cyber-physical systems},
  author={Clarke, Edmund M and Zuliani, Paolo},
  booktitle={International symposium on automated technology for verification and analysis},
  pages={1--12},
  year={2011},
  organization={Springer}
}

@article{steinhardt2012finite,
  title={Finite-time regional verification of stochastic non-linear systems},
  author={Steinhardt, Jacob and Tedrake, Russ},
  journal={The International Journal of Robotics Research},
  volume={31},
  number={7},
  pages={901--923},
  year={2012},
  publisher={SAGE Publications Sage UK: London, England}
}

@article{prajna2007convex,
  title={Convex programs for temporal verification of nonlinear dynamical systems},
  author={Prajna, Stephen and Rantzer, Anders},
  journal={SIAM Journal on Control and Optimization},
  volume={46},
  number={3},
  pages={999--1021},
  year={2007},
  publisher={SIAM}
}

@inproceedings{clarke1997model,
  title={Model checking},
  author={Clarke, Edmund M},
  booktitle={Foundations of Software Technology and Theoretical Computer Science: 17th Conference Kharagpur, India, December 18--20, 1997 Proceedings 17},
  pages={54--56},
  year={1997},
  organization={Springer}
}

@book{manna2012temporal,
  title={Temporal verification of reactive systems: safety},
  author={Manna, Zohar and Pnueli, Amir},
  year={2012},
  publisher={Springer Science \& Business Media}
}

@book{baier2008principles,
  title={Principles of model checking},
  author={Baier, Christel and Katoen, Joost-Pieter},
  year={2008},
  publisher={MIT press}
}

@article{santoyo2021barrier,
  title={A barrier function approach to finite-time stochastic system verification and control},
  author={Santoyo, Cesar and Dutreix, Maxence and Coogan, Samuel},
  journal={Automatica},
  volume={125},
  pages={109439},
  year={2021},
  publisher={Elsevier}
}

@article{mathiesen2022safety,
  title={Safety certification for stochastic systems via neural barrier functions},
  author={Mathiesen, Frederik Baymler and Calvert, Simeon C and Laurenti, Luca},
  journal={IEEE Control Systems Letters},
  volume={7},
  pages={973--978},
  year={2022},
  publisher={IEEE}
}

@article{xue2024,
  author={Xue, Bai and Zhan, Naijun and Fränzle, Martin},
  journal={IEEE Transactions on Automatic Control}, 
  title={Reach-Avoid Analysis for Polynomial Stochastic Differential Equations}, 
  year={2024},
  volume={69},
  number={3},
  pages={1882-1889}
}

@article{kushner1967stochastic,
  title={Stochastic stability and control.},
  author={Kushner, Harold Joseph},
  year={1967},
  publisher={ACADEMIC PRESS, INC.}
}

@inproceedings{chakarov2016deductive,
  title={Deductive proofs of almost sure persistence and recurrence properties},
  author={Chakarov, Aleksandar and Voronin, Yuen-Lam and Sankaranarayanan, Sriram},
  booktitle={International Conference on Tools and Algorithms for the Construction and Analysis of Systems},
  pages={260--279},
  year={2016},
  organization={Springer}
}
